\documentclass[submission,copyright,creativecommons]{eptcs}
\providecommand{\event}{AFL 2026}
\usepackage[T1]{fontenc}
\usepackage{graphicx}
\usepackage{amsmath}
\usepackage{amssymb}
\usepackage{amsthm}
\usepackage{stmaryrd}

\usepackage{xspace}
\usepackage{forest}

\usepackage{tikz}
\usetikzlibrary{automata, positioning, arrows, arrows}

\usepackage{subcaption}

\usepackage{enumitem}

\usepackage{tikz}
\usepackage{pgfplots}                                          
\pgfplotsset{compat=1.18}   

\usepackage{iftex}

\ifpdf
\usepackage{underscore}         
\usepackage[T1]{fontenc}        
\else
\usepackage{breakurl}           
\fi

\usepackage{ifthen}

\newcommand{\FAT}[1]{\ensuremath{\left(Q^{ #1}, \Sigma, \delta^{ #1}, Q_I^{#1}, Q_F^{ #1}\right)}}
\newcommand{\DFAT}[1]{\ensuremath{\left(Q^{ #1}, \Sigma, \delta^{ #1}, q_I^{#1}, Q_F^{ #1}\right)}}

\newcommand{\TREEALPHABET}{\ensuremath{\Sigma_{\mathcal{T}}}}
\newcommand{\ALL}[2]{\ensuremath{\mathcal{C}^{#1}_{#2}}}
\newcommand{\EX}[1]{\ensuremath{ex #1}}

\newcommand{\AT}{\ensuremath{\mathcal{T}}}
\newcommand{\TL}{\ensuremath{T_L}}
\newcommand{\A}{\ensuremath{\mathcal{A}}}
\newcommand{\B}{\ensuremath{\mathcal{B}}}
\newcommand{\AB}{\ensuremath{\mathcal{AB}}}

\newcommand{\ED}[1]{\ensuremath{\textsc{Edu}_{\leq #1}}\xspace}
\newcommand{\MINDFA}{\ensuremath{\textsc{Min}_{\DFA}}\xspace}
\newcommand{\NATURAL}{\ensuremath{\textsc{Nat}_{\textit{eq}}}\xspace}

\newcommand{\AllCorrect}{\textsc{AllCorrect}\xspace}
\newcommand{\FA}{\textsc{FA}}
\newcommand{\RFA}{\textsc{RFA}}
\newcommand{\NFA}{\textsc{NFA}}
\newcommand{\DFA}{\textsc{DFA}}
\newcommand{\UA}{\textsc{UA}}
\newcommand{\trim}{\textsf{accessible}}
\newcommand{\trimDFA}{\trim\textsc{-DFA}}

\newcommand{\TDNFTA}{\textsc{NtdTA}}
\newcommand{\TDDFTA}{\textsc{DtdTA}}

\newcommand{\LOGSPACE}{\textsc{LOGSPACE}}

\renewcommand{\epsilon}{\varepsilon}
\newcommand{\true}{\textsf{true}}
\newcommand{\false}{\textsf{false}}
\newcommand{\lelex}{\textsf{lelex}}
\newcommand{\inner}{\textsf{inner}}
\newcommand{\Sub}{\textsf{Sub}}
\newcommand{\leaves}{\textsf{leaves}}
\newcommand{\dom}{\textsf{dom}}
\newcommand{\img}{\textsf{img}}

\newcommand{\Dist}{\textsf{Dist}}
\newcommand{\corr}{\rho_\mathcal{AB}}

\newcommand{\corrAlph}{\Sigma^{\mathcal{A}}_{\mathsf{corr}}}

\newcommand{\nerode}[1]{\ifthenelse{\equal{#1}{}}{\sim}{\sim_{#1}}}
\newcommand{\eqclass}[2]{\ensuremath{[#1]_{\ifthenelse{\equal{#2}{}}{}{\nerode{#2}}}}}

\makeatletter
\@tfor\@tempa:=ABCDEFGHIJKLMNOPQRSTUVWXYZ\do{%
	\expandafter\ifx\csname\@tempa\endcsname\relax
	\expandafter\edef\csname\@tempa\endcsname{%
		\noexpand\mathcal{\@tempa}%
	}%
	\fi
}
\makeatother

\newtheorem{proposition}{Proposition}
\newtheorem{theorem}{Theorem}
\newtheorem{definition}{Definition}
\newtheorem{lemma}{Lemma}
\newtheorem{corollary}{Corollary}

\DeclareMathAlphabet{\mathcal}{OMS}{cmsy}{m}{n}

\title{Correcting Deterministic Finite Automata for Didactic Feedback}
\author{Maurice Bornett Herwig 
	\institute{Software Engineering}
	\institute{University of Kassel \\ Kassel, Germany}
	\email{\quad maurice.herwig@uni-kassel.de }
	\and
	Norbert Hundeshagen
	\institute{Theoretical Computer Science / Formal Methods}
		\institute{University of Kassel \\ Kassel, Germany}
	\email{hundeshagen@uni-kassel.de}
}

\newcommand{\titlerunning}{Correcting Deterministic Finite Automata for Didactic Feedback}
\newcommand{\authorrunning}{M. Bornett Herwig \& N. Hundeshagen}

\hypersetup{
	bookmarksnumbered,
	pdftitle    = {\titlerunning},
	pdfauthor   = {\authorrunning},
	pdfsubject  = {EPTCS},               
}

\begin{document}

\maketitle              

    \begin{abstract}
    	Motivated by educational applications, we study the problem of computing all corrections that transform a finite automaton into one recognizing a given regular language $L$. We show that for deterministic finite automata the set of all corrections can be finitely characterized as a regular tree language. The construction is based on a tree encoding of all deterministic finite automata recognizing $L$, which is extended to correction trees that make individual corrections and their induced edit-operations explicit. Leveraging the closure properties of regular tree languages, we introduce so-called filters for selecting corrections satisfying didactic constraints, enabling the derivation of individualized feedback from student submissions.

    \end{abstract}

    \section{Introduction}
    Abstract machines (e.g.\ automata) are fundamental modeling concepts in computer science, with well-known applications in various areas such as circuit design, compiler construction, and natural language processing. Finite automata (\FA) can be seen as the basic principle in this area and therefore are included in standard curricula in theoretical computer science (e.g.\ in \cite{hopcroft2001introduction}).

Typically, the understanding of the syntax and semantics of finite automata as well as the development of basic modeling competencies are supported by exercises requiring students to construct certain automata from a given language description. As the concept of \FA\ is generally considered accessible, exercises of the above type can nevertheless be challenging depending on the given language (cf.\ \cite{pillay2010learning}). 
Educational software designed to support students in overcoming these challenges is widespread. 
The underlying techniques of such tools are manifold. They range from simply verifying correctness via, for instance, an equivalence check against a master solution and providing negative feedback in the form of counterexamples (e.g.\ in \cite{shekhar2014enhancing,bezakova2022effective,robson2024fsm}), to attempts to determine a student's solution intent in order to enable automated personalized feedback (e.g.\ in  \cite{d2015can,d2015automata,d2020automata}). 
We contribute to the latter direction by applying the method of error-correction to \FA\ in order to establish a framework that finitely represents all possible corrections of a given student automaton to a target language. The notion of error-correction is well known in formal language theory, where it describes a method of modifying input data via sequences of edit-operations such as insertions, deletions, and replacements in order to obtain a valid object, for instance a word in a given context-free language (cf.\ \cite{aho1972minimum}) or a tree structure (cf.\ \cite{tai1979treetotree}). From a didactic perspective, such edit-operations can be interpreted as feedback information on the input data (for instance, a solution attempt to an exercise) (cf.\ \cite{bruse2024computing}). As such, they are used in programming education to provide next-step hints to students (e.g.\ in \cite{singh2013automated,rivers2017data}). However, while error-correction traditionally aims at a single correction that is optimal with respect to a given metric (e.g., minimizing the number of edit-operations), this approach seems insufficient in a didactic setting, where different students may follow different solution strategies, that is, different targets a solution attempt may be corrected to. The latter has already been established for context-free languages as the problem of computing all possible ways to correct a given source word into a target context-free language \cite{bruse2024computing,herwig2025earley}. Here we adapt this problem (called \AllCorrect) to finite automata in the following form:
 
\begin{center}
	\begin{tabular}{rl}
		\emph{given}: & a \emph{source} automaton $\A$ and a regular \emph{target} language $L$,\\
		\emph{compute}: & the set $\ALL{\A}{L}$ of all possibilities to \textit{correct} $\A$ such that the language accepted by $\A$ equals $L$.
	\end{tabular}
\end{center}

To this end, we show that for deterministic \FA\ the problem \AllCorrect\ is computable by finitely representing $\ALL{\A}{L}$ as a regular tree language (cf.\ Section~\ref{sec:corrections}). 
From a theoretical perspective, our construction combines classical techniques from automata and formal language theory in an educational setting. 
From this didactic point of view, the finite representation of all corrections of deterministic \FA\ provides a formal framework for applications intended to reason about a student's solution intent in the context of automata exercises (cf.\ Section~\ref{sec:didactics}).

The paper is organized as follows. In Section~\ref{sec:preliminaries} we recall preliminaries on finite automata and finite tree automata. Section~\ref{sec:dfas_and_trees} establishes a tree encoding of deterministic \FA\ and shows that all trees representing automata form a regular tree language. In Section~\ref{sec:corrections}, we prove  the main result of the paper, namely that $\ALL{\A}{L}$ is a regular tree language. In Section~\ref{sec:didactics} we show by example how the regular tree language of all corrections can be used for didactic purposes. Finally, in Section~\ref{sec:conclusion}, we discuss the integration of nondeterministic \FA\ into the presented framework and directions for future work.

    \section{Preliminaries}\label{sec:preliminaries}
    We recall some preliminaries regarding regular languages, finite automata, and finite tree automata.

For that, an \emph{alphabet} $\Sigma$ is a finite, non-empty set of symbols. A \emph{word} $w$ over $\Sigma$ is a finite sequence $w = a_1 \ldots a_{n}$ of symbols. Its length $|w|$ is $n$. The empty word is denoted by $\varepsilon$ and has length $0$. With $\Sigma^*$ we denote the set of all words. A \emph{language} is a subset $L \subseteq \Sigma^*$.

Let $<$ be a total order on $\Sigma$, and let $u = b_1 \cdots b_m$ and $v = a_1 \cdots a_n$. 
We call $u$ a \emph{prefix} of $v$, denoted by $u \prec v$, if there exists a word $x \in \Sigma^*$ such that $v = ux$.
The length-lexicographic order $<_{lelex}$ between words over $\Sigma$ is defined as:
$$
u <_{lelex} v \iff
\begin{aligned}
	& |u| < |v|, \text{ or} \\
	& |u| = |v| \text{ and } \exists i \in \{1, \ldots, min(m, n)\} \text{ such that } b_1 = a_1, \ldots, b_{i-1} = a_{i-1} \text{ and } b_i < a_i.
\end{aligned}
$$

For a function $f$, we denote its domain and image by $\dom(f)$ and $\img(f)$, respectively.

\paragraph*{Finite Automata}
A \emph{nondeterministic finite automaton} (\NFA) $\A = \FAT{}$, where $Q$ is a finite non-empty set of states, $\Sigma$ is an alphabet, $\delta\subseteq Q \times \Sigma \times Q$ is the \emph{transition relation} and $Q_I, Q_F \subseteq Q$ are the sets of \emph{initial states} and \emph{final states}, respectively. 
A \emph{run} of $\A$ on a word $w=a_1\cdots a_n$ is a finite sequence of states $q_0q_1\cdots q_n$ such that $q_0\in Q_I$ and $(q_{i-1}, a_{i},q_{i}) \in \delta$ for all $i \in \{1, \ldots, n\}$. A run is called \emph{accepting} if $q_n \in Q_F$ and the language accepted by $\A$ is ${L(\A) := \{w \in \Sigma^* \mid \text{there is an accepting run of } \A \text{ on } w\}}$.

An \NFA\ is called \emph{accessible} if every state $q \in Q$ is reachable, that is, there is a run from some initial state to $q$. 
A state $p \in Q$ is called a \emph{sink state} if $p \notin Q_F$ and satisfies $\delta(p,a)=p$ for all $a\in\Sigma$.

An \NFA\ is a \emph{deterministic finite automaton} (\DFA) if $\delta$ is right-unique and left-total, that is, a function $\delta\colon Q\times\Sigma\to Q$ and $|Q_I| = 1$. Since $\delta$ is a function, we extend it to runs and write $\hat{\delta}(p, w)=q$ to express that $q$ is reached in a run from $p$ reading $w$. In the following, when dealing with \DFA\ we treat $\delta$ as either a function or a relation, depending on the context.

Let $L \subseteq \Sigma^*$ be a language. The \emph{Myhill-Nerode relation} $\nerode{L}$ is defined by 
$$w' \nerode{L} w \iff \forall u \in \Sigma^*: w'u \in L \iff wu \in L.$$ It partitions $\Sigma^*$ into equivalence classes $\eqclass{w}{L} = \{u \in \Sigma^* \mid u \nerode{L} w\}$. With $\setminus_{\nerode{L}}$ we denote the quotient of a structure under $\nerode{L}$, e.g.  $\Sigma^* \setminus_{\nerode{L}} = \{\eqclass{w}{L} \mid w \in \Sigma^* \}$.
To increase readability we simply write $\nerode{}$ and $\eqclass{u}{}$ for the Nerode-relation and its equivalence classes, when $L$ is clear from the context.
If $L$ is regular, then $\nerode{}$ has finite index. Based on $\nerode{}$ one can compute a unique \emph{minimal} \trimDFA\ $\A_{\min}$ for every regular language $L$, where each state corresponds exactly to one equivalence class of $\nerode{}$.
Furthermore for any \trimDFA\ accepting $L$, there is a surjective homomorphism onto $\A_{\min}$
(cf.~\cite[Section~4.4]{hopcroft2001introduction}).

\paragraph*{Finite Tree Automata}
We here only present a brief summary of definitions and refer to \cite{TATA} for a comprehensive introduction. A \emph{ranked alphabet} is a pair $\left(\Sigma, \sigma_{\Sigma}\right)$, where $\Sigma$ is, as in the case of words, a finite, non-empty set of symbols, and $\sigma_{\Sigma}: \Sigma \to \mathbb{N}$ is a function that assigns to each symbol a natural number, called its \emph{rank}. 
A (finite) \emph{tree-domain} is a prefix-closed subset $D \subseteq \mathbb{N}^*$ such that for all $v \in \mathbb{N}^*$ and $i \in \mathbb{N}$: 
$$v(i+1) \in D \Rightarrow vi \in D.$$ Observe that $vi$ and $v(i+1)$ refer to the concatenation of natural numbers. 
A \emph{$\Sigma$-tree} is a function ${t\colon D\to \left(\Sigma, \sigma_{\Sigma}\right)}$ (or simply $t\colon D\to \Sigma$) that assigns to each node $v \in D$ a symbol (or label) $t(v) \in \Sigma$ such that, for all $v \in D$ with $\sigma_{\Sigma}(t(v)) = k$, it holds that $v(k-1) \in D$ and $v(k) \notin D$.

Furthermore with $\operatorname{\inner}(t) := \{v \in D \mid \sigma_{\Sigma}(t(v)) > 0\}$ we denote the set of all non-leaf nodes of $t$ and with $\operatorname{\leaves}(t) := D\setminus\operatorname{\inner}(t)$ the leaves of $t$.

A \emph{path} in a $\Sigma$-tree $t$ is a sequence $t(v_1)\cdots t(v_n) \in \Sigma^*$ such that $v_1,\dots,v_n \in D$ satisfying $v_1 = 0$ and, for all $1 \leq i < n$, $v_{i+1} = v_i  j$ for some $j < \sigma_{\Sigma}(t(v_i))$.

A \emph{top-down nondeterministic finite tree automaton} (\TDNFTA) is a tuple $\AT = \left(Q, F, Q_I, \delta \right)$, where $Q$ is a finite, non-empty set of states, $F = \left(\Sigma, \sigma_{\Sigma}\right)$ is a ranked alphabet, $Q_I \subseteq Q$ is the set of \emph{initial states}, and $\delta = \bigcup_{a \in \Sigma} \delta_a$ is the \emph{transition relation} with $\delta_a: Q \to 2^{Q^{\sigma_{\Sigma}(a)}}$. If $\sigma_{\Sigma}(a) = 0$, we simply write $Q \to \{\true, \  \false\}$ instead of $Q \to \{\{\}, \{()\}\}$.  

A \emph{run} of $\AT$ on a $\Sigma$-tree $t\colon D \to \Sigma$ is a function $p: D \to Q$ such that $p(\varepsilon) = q_I$ for $q_I \in Q_I$, and for all $v \in D$ with $t(v) = a \in \Sigma$ and $\sigma_{\Sigma}(a) = k$, it holds that: 
$$\left(p(v0), \ldots, p(v(k-1)) \right) \in \delta_a (p(v)).$$
A run is \emph{accepting}, if $\delta_a (p(v)) = true$ for each leaf $v$ of the tree. 
The tree-language accepted by $\AT$ is defined as: 
$L(\AT) = \{t \in T_{\Sigma} \mid $ there exists an accepting run of $\AT \text{ on } t\}$.
A tree-language is called \emph{regular} if it is accepted by some \TDNFTA. 

An \TDNFTA is called a \emph{deterministic finite tree automaton} (\TDDFTA), if $|\delta_a(q)| \leq 1$ for all $q \in Q, a \in \Sigma$. 
In contrast to finite automata, \TDDFTA\ are less expressive than \TDNFTA.
    
    \section{DFAs and Trees}\label{sec:dfas_and_trees}
    Again, for a given regular language $L$ we are interested in correcting a given finite automaton $\A$ to all target automata $\B$ with $L(\B)=L$. Thus, according to the formulation of the \AllCorrect problem in the introduction, we will establish a finite representation of all finite automata accepting $L$. This representation is introduced for deterministic finite automata, whereas later we discuss possible extensions onto nondeterministic versions (see Section~\ref{sec:conclusion}).

In the following we fix an alphabet $\Sigma = \{x_1,\dots, x_k\}$, a total order $x_1 < \dots < x_k$ of alphabet symbols, and a regular language $L\subseteq \Sigma^*$. Furthermore, we heavily use properties of the  Nerode-equivalence $\nerode{L}$ and overload the equivalence class notation $\eqclass{v}{L}=\{w\mid v\nerode{L} w\}$, by either referring to a representative $u$ of the class, here the length-lexicographic shortest word $u<_\lelex v$, or to the equivalence class itself as a set.

We start by stating a series of implicitly given definitions that are intended to follow the standard method of encoding a \DFA\ as a tree in form of a prefix tree unfolding of its transition graph. For that, let $(\TREEALPHABET,\sigma_{\TREEALPHABET})$ be a ranked tree alphabet with $\TREEALPHABET=\{\eqclass{u}{L}, \bot_{\eqclass{u}{L}} \mid u \in \Sigma^*\}$ and 
$\sigma_{\TREEALPHABET}(\eqclass{u}{L}) = |\Sigma|$ and $\sigma_{\TREEALPHABET}(\bot_{\eqclass{u}{L}}) = 0$ for all $u \in \Sigma^*$. Observe that since $L$ is regular $\TREEALPHABET$ is finite. 

Let $\A = \DFAT{\A}$ be an \trimDFA\ with $L(\A)=L$ and let $f_{\lelex}\colon Q^{\A}\to\Sigma^*$ be the function 
$$f_{\lelex}(q) := min_{<_\lelex}\left(\left\{w \in \Sigma^* \mid  \hat{\delta^{\A}}(q_I^{\A}, w) = q\right\}\right)$$
 that associates every state $q$ of $\A$ with the length-lexicographic shortest word reaching that state. Note, that
  $f_{\lelex}\left(Q^{\A}\right)=\left\{f_{\lelex}(q)\mid q\in Q^{\A}\right\}$
  is prefix-closed and thus, a tree domain. We further extend $f_{\lelex}(Q^{\A})$ to the domain $D:= f_{\lelex}(Q^{\A}) \cup \{wx\mid x\in\Sigma, w\in f_{\lelex}(Q^{\A})\}$ to encode the leaves of the tree unfolding of $\A$.
  
  For $\A$ we define a $\TREEALPHABET$-tree $t_\A\colon D \to \TREEALPHABET$ as follows: 
  \begin{align*}
    	t(w) := \begin{cases}
  		\eqclass{w}{L}, & \textit{ if }w\textit{ is not a leaf in } D \textit{ or,} \\
  		\bot_{\eqclass{w}{L}},  & \textit{ else.}
  	\end{cases}
  \end{align*} Figure~\ref{fig:example_state_tree} shows an example. 
  Note, that this definition essentially uses the fact, that there is a surjective homomorphism from $\A$ onto the Nerode-(resp.~minimal-)automaton for $L$ (cf.~Section~\ref{sec:preliminaries}).
  Thus, $t_\A$ can basically be interpreted as a tree of all runs reaching any state in $\A$ once, mapped onto the minimal \DFA.   
  \begin{figure}[]
  	\centering
  	\begin{minipage}{0.45\textwidth}
  		\begin{tikzpicture}[auto]
  			\node[state, initial, initial text={}] (q0) at (0,0) {$0$};
  			\node[state] (q1) at (2,0) {$1$};
  			\node[state, accepting] (q2) at (4,0) {$2$};
  			\node[state] (q3) at (0, 1.5) {$3$};
  			
  			\path[->]
  			(q0) edge[bend left, left] node{$a$} (q3)
  			(q3) edge[bend left, left] node{$a$} (q0)
  			(q3) edge[] node{$b$} (q1)
  			(q0) edge[above] node{$b$} (q1)
  			(q1) edge[loop above] node{$a$} (q1)
  			(q1) edge[above, bend left] node{$b$} (q2)
  			(q2) edge[below, bend left] node{$b$} (q1)
  			(q2) edge[loop above] node{$a$} (q2);
  			
  		\end{tikzpicture}
  	\end{minipage}
  	\hfill
  	\begin{minipage}{0.4\textwidth}
  		\centering
  		\resizebox{\linewidth}{!}{
  			\begin{tikzpicture}[
  				level distance=7mm,
  				level 1/.style={sibling distance=30mm},
  				level 2/.style={sibling distance=15mm},
  				level 3/.style={sibling distance=15mm},
  				every node/.style={
  					draw,
  					rounded rectangle,
  					align=center,
  					inner sep=2pt
  				},
  				empty/.style={
  					draw=none,
  					inner sep=0pt,
  					minimum size=1pt
  				},
  				dotted child/.style={
  					edge from parent/.style={draw,dotted}
  				},
  				edge from parent/.style={draw}
  				]
  				
  				\node {$\eqclass{\varepsilon}{}$}
  				child {
  					node  {$\eqclass{\varepsilon}{}$}
  					child { node {$\bot_{\eqclass{\varepsilon}{}}$}}
  					child { node {$\bot_{\eqclass{b}{}}$}}
  				}
  				child {
  					node  {$\eqclass{b}{}$}
  					child { node {$\bot_{\eqclass{b}{}}$} }
  					child {
  						node {$\eqclass{bb}{}$}
  						child { node{$\bot_{\eqclass{bb}{}}$} }
  						child { node  {$\bot_{\eqclass{b}{}}$} }
  					}
  				};
 
  			\end{tikzpicture}
  		}
  	\end{minipage}
  	\caption{An exemplary \DFA\ $\A_{\EX{1}}$ (left) for the language $L_{\A_{\EX{1}}}=\{w\in\{a,b\}^*\mid |w|_b \geq 2 \textit{ and even}\}$ and its representation as a $\TREEALPHABET$-tree $t_{\A_{\EX{1}}}$ (right) with the tree alphabet $\TREEALPHABET$ comprised of the Nerode-classes of $L_{\A_{\EX{1}}}$ given as $[\varepsilon]$, $[b]$, and $[bb]$.}
  	\label{fig:example_state_tree}
  \end{figure}
 
 The following lemma formalizes the connection between $\A$ and the minimal \DFA\ for $L$ by essentially showing that the tree induced by the minimal \DFA\ of $L$ is a prefix of the tree representation of every \DFA\ accepting the same language. 

\begin{lemma}\label{lemma:prefix-tree}
	 	Let $L$ be a regular language and  $\A_{\min} = \DFAT{\min}$ be the minimal \DFA\ for $L$. 
	 	Then for every \DFA\ $\A = \DFAT{\A}$ with $L(\A) = L$ it holds that 
	 	$$f_{\lelex}\left(Q^{\min}\right) \subseteq f_{\lelex}\left(Q^{\A}\right).$$
\end{lemma}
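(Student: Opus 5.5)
The plan is to show that each lelex-least representative of a Nerode class is also the lelex-least word reaching some state of $\A$. Fix a state $q \in Q^{\min}$ and let $u := f_{\lelex}(q)$. The argument rests on two facts about languages of words reaching a state. First, in the minimal \DFA\ the set $\{w \mid \hat{\delta^{\min}}(q_I^{\min}, w) = q\}$ is exactly one Nerode class, so $u$ is the lelex-minimum of $\eqclass{u}{L}$. Second, in any \DFA\ $\A$ with $L(\A)=L$, all words reaching a common state are Nerode-equivalent.

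\textbf{The minimal DFA.} The states of $\A_{\min}$ correspond bijectively to the classes of $\nerode{L}$, and $\hat{\delta^{\min}}(q_I^{\min}, w)$ is the state of the class $\eqclass{w}{L}$ (cf.~Section~\ref{sec:preliminaries}). Hence the set of words reaching $q$ equals the class $\eqclass{u}{L}$, and therefore
$$u = \min_{<_\lelex} \eqclass{u}{L}.$$

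\textbf{Choosing the state of $\A$.} Let $p := \hat{\delta^{\A}}(q_I^{\A}, u)$. This state is reachable, so $f_{\lelex}(p)$ is defined, and it is the minimum of the set $W_p := \{w \mid \hat{\delta^{\A}}(q_I^{\A}, w) = p\}$. I claim $W_p \subseteq \eqclass{u}{L}$. Take $w \in W_p$ and any $x \in \Sigma^*$. By determinism, $wx$ and $ux$ lead to the same state $\hat{\delta^{\A}}(p,x)$. Consequently $wx \in L$ if and only if $ux \in L$, that is, $w \nerode{L} u$. Alternatively, this follows directly from the surjective homomorphism from $\A$ onto $\A_{\min}$ cited in Section~\ref{sec:preliminaries}.

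\textbf{Conclusion.} We have $u \in W_p \subseteq \eqclass{u}{L}$, and $u$ is the least element of the larger set $\eqclass{u}{L}$. Hence $u$ is also the least element of $W_p$, so $f_{\lelex}(p) = u$ and $f_{\lelex}(q) \in f_{\lelex}(Q^{\A})$. Since $q$ was arbitrary, $f_{\lelex}(Q^{\min}) \subseteq f_{\lelex}(Q^{\A})$.

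I do not expect a genuine obstacle. The only point needing care is that the lemma speaks of an arbitrary \DFA, not only an accessible one. This causes no difficulty, because $p$ is reachable by construction, and the containment $W_p \subseteq \eqclass{u}{L}$ uses only determinism and $L(\A)=L$.
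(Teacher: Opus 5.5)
Your proof is correct and follows essentially the paper's argument: take $p$ to be the state of $\A$ reached by $u=f_{\lelex}(q)$, observe that every word reaching $p$ is Nerode-equivalent to $u$ (equivalently, would reach $q$ in $\A_{\min}$), and conclude that no word lelex-smaller than $u$ reaches $p$. The one difference is that you get this containment directly from determinism and $L(\A)=L$ rather than from the surjective homomorphism onto $\A_{\min}$; this has the small advantage of not needing $\A$ to be accessible, which the lemma as stated does not assume.
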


\begin{proof}
	Recall that there is a surjective homomorphism $h: Q^{\A} \to Q^{\min}$. 
	Let $q \in Q^{\min}$ and ${u = f_{\lelex}(q)}$, then there is a $p \in Q^{\A}$ with $h(p)=q$ and $\hat{\delta}^{\A}(q_I^{\A}, u)=p$. It remains to show that $f_{\lelex}(p) = u$, that is,
	$u$ is also the length-lexicographically shortest word reaching $p$ in $\A$.
	
	Suppose the contrary that there is $v <_{\lelex} u$ with $\hat{\delta}^{\A}(q_I^{\A}, v)=p$. Since $h$ is a homomorphism with $h(p)=q$ this would imply that $\hat{\delta}^{\min}(q_I^{\min}, v)=q$, contradicting the minimality of $u=f_{\lelex}(q)$.
\end{proof}
Finally, we define the set of tree representations of all \trimDFA\ for a given regular language $L$ as $$\TL:=\{t_\A\mid \A\textit{ is an \trimDFA\ with } L(\A)=L\}.$$ 

\begin{proposition}\label{prop:dfa_reg_tree_lang}
	Let $L\subseteq \Sigma^*$ be a regular language, then $\TL$ is a regular tree language.
\end{proposition}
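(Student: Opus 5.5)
The plan is to give an explicit characterization of $\TL$ by conditions that a top-down tree automaton can check, and then to build that automaton. A tree $t\colon D\to\TREEALPHABET$ is called \emph{well-formed} if the following four conditions hold.
\begin{enumerate}[label=(\roman*)]
\item The root is labelled $\eqclass{\varepsilon}{L}$.
\item For every inner node $w$ with $t(w)=\eqclass{u}{L}$ and every $x_i\in\Sigma$, the $i$-th child $w x_i$ carries $\eqclass{u x_i}{L}$ or $\bot_{\eqclass{u x_i}{L}}$. This is well defined because $\nerode{L}$ is a right congruence.
\item Every inner node has exactly $|\Sigma|$ children, and every leaf has a $\bot$-label. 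This is automatic from the ranks.
\item Every word $f_{\lelex}(q)$ with $q\in Q^{\min}$ is an inner node of $t$. This set is finite and prefix-closed.
\end{enumerate}
I will show that $\TL$ is exactly the set of well-formed trees.

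For ``$\subseteq$'' I take $\A$ accessible with $L(\A)=L$. Conditions (i)--(iii) follow from the definition of $t_\A$ together with the surjective homomorphism onto $\A_{\min}$. Condition (iv) is exactly Lemma~\ref{lemma:prefix-tree}.

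For ``$\supseteq$'' I take a well-formed tree $t$ and build a \DFA\ $\A_t$ as follows.
\begin{itemize}
\item The states are the inner nodes of $t$, the initial state is $\varepsilon$, and $w$ is final iff $t(w)\subseteq L$, i.e.\ the class of $w$ lies in $L$.
\item A transition $\delta(w,x)=wx$ is used whenever $wx$ is inner.
\item If $wx$ is a leaf labelled $\bot_{\eqclass{v}{L}}$, I set $\delta(w,x)=u$, where $u$ is the $<_\lelex$-least word in $\eqclass{v}{L}=\eqclass{wx}{L}$.
\end{itemize}
By (iv), $u=f_{\lelex}(q)$ for some $q\in Q^{\min}$, so $u$ is an inner node. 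Since $wx$ is a leaf, $u\neq wx$, hence $u<_\lelex wx$.

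Three facts then need checking.
\begin{enumerate}[label=(\alph*)]
\item Every state $w$ satisfies $\hat\delta(\varepsilon,y)\nerode{L} y$ for all $y$. This goes by induction on $|y|$ using (ii) and the choice of back-edges, and it gives $L(\A_t)=L$.
\item $\A_t$ is accessible, since each inner node $w$ is reached by reading $w$.
\item $f_{\lelex}(w)=w$ for each inner node $w$.
\end{enumerate}
For (c) I argue that every word $y$ reaching $w$ satisfies $w\le_\lelex y$. I would do induction on $|y|$. Look at the last back-edge used in the run on $y$, say $y=y'xz$ where reading $x$ from the state reached by $y'$ jumps to $u$, and then $z$ follows tree edges, so that $w=uz$. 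By induction $y'\ge_\lelex w'$, where $w'$ is the node reached by $y'$. Then $y'x\ge_\lelex w'x>_\lelex u$, and $<_\lelex$ is compatible with right concatenation, so $y\ge_\lelex uz=w$. With this, the tree unfolding $t_{\A_t}$ has domain $D$ and labels as in $t$, so $t=t_{\A_t}\in\TL$.

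It remains to show that well-formed trees form a regular tree language. I would build a \TDNFTA\ whose states are pairs consisting of a Nerode class $c$ (the expected label) and either a word in the finite prefix-closed set $P=f_{\lelex}(Q^{\min})$ (the current position, as long as we are still inside $P$) or a marker $\ast$ (outside $P$). The initial state is $(\eqclass{\varepsilon}{L},\varepsilon)$. On reading label $c$ in state $(c,w)$, the automaton sends $(c\,x_i, w x_i)$ to child $i$ if $w x_i\in P$, and $(c\,x_i,\ast)$ otherwise. A leaf $\bot_{c}$ is accepted only in state $(c,\ast)$, which enforces (iv). Any label mismatch has no transition, which enforces (i) and (ii). Finiteness of this automaton follows from $L$ being regular.

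I expect the main obstacle to be the reverse inclusion, specifically item (c). One must be careful that back-edges always point to $<_\lelex$-smaller nodes and that condition (iv) alone, which is a purely local and finite condition, guarantees such a target exists. That is exactly where Lemma~\ref{lemma:prefix-tree} and the choice of least class representatives are used.
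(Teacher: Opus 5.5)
Your proof is correct. Its skeleton matches the paper's: check labels by running $\A_{\min}$ (equivalently, the Nerode classes) down every path, enforce the embedding of $f_{\lelex}(Q^{\min})$ from Lemma~\ref{lemma:prefix-tree}, and for the converse build a \DFA\ on the inner nodes with back-edges at the $\bot$-leaves, whose language is verified via Nerode classes. You deviate from the paper in two places, and in both the paper's argument is loose.

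\emph{Enforcing (iv).} You track the exact position inside the finite prefix-closed set $P=f_{\lelex}(Q^{\min})$, so your automaton accepts exactly the well-formed trees. The paper instead keeps a single bit, set once a state of $Q^{\min}_{\leaves}$ is visited on the path. That is a condition on $\A_{\min}$-states rather than on positions, and it is not equivalent to (iv). In Figure~\ref{fig:example_state_tree}, where $Q^{\min}_{\leaves}=\{\eqclass{bb}{}\}$, the leaf $aa$ of $t_{\A_{\EX{1}}}$ is reached through $\eqclass{\varepsilon}{}$-states only, so the bit is never set, although $t_{\A_{\EX{1}}}\in\TL$.

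\emph{The converse.} The paper redirects a leaf to an arbitrary equally labelled node and proves only $L(\A^t)=L$. It never checks $t_{\A^t}=t$, which is what membership in $\TL$ requires. With an arbitrary target this can fail: a back-edge from a leaf $wx$ to an inner node $v$ with $wx<_{\lelex}v$ forces $f_{\lelex}(v)\neq v$, so $v$ is not an inner node of $t_{\A^t}$. Your choice of the $<_{\lelex}$-least representative, which (iv) guarantees is an inner node, together with the monotonicity argument in (c), closes exactly this gap.

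The cost of your route is a position component ranging over $P\cup\{\ast\}$ instead of a bit, plus the short right-compatibility argument for $<_{\lelex}$; in return you get a complete proof.

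A minor point: for $t_\A$, conditions (i)--(iii) follow directly from $t_\A(w)\in\{\eqclass{w}{},\bot_{\eqclass{w}{}}\}$, so the homomorphism is not needed there.
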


\begin{proof}
	The proof follows from the definition of trees and the fact that for each \trimDFA\ for $L$ there is a surjective homomorphism onto the unique minimal \DFA\ $\A_{\min}$.
	Therefore, the main idea is that the tree automaton $\mathcal{T}$ for $\TL$ simulates a run of
	$\A_{\min}$ along every root-to-leaf path of a tree, 
	while ensuring that the prefix tree induced by the minimal \DFA\ for $L$ is embedded in the tree, as required by Lemma~\ref{lemma:prefix-tree}. 

	For that, let $\Sigma=\{x_1,\dots, x_k\}$ and $\A_{\min} = \DFAT{\min}$ be the minimal \DFA\ for $L$.
	Furthermore, let $Q^{\min}_{\leaves} = \left\{ q \in Q^{\min} \mid \nexists p\in  Q^{\min} \textit{ such that } f_{\lelex}(q) \prec f_{\lelex}(p) \right\}$ denote the subset of states corresponding to the leaves of the tree domain induced by $f_{\lelex}(Q^{\min})$.
	From $\A_{\min}$ we construct a \TDDFTA\ 
	$$\AT = \left(Q^{\min}\times\left(\Sigma \cup\{\varepsilon\}\right)\times \{\true, \false\}, \left(\TREEALPHABET,\sigma_{\TREEALPHABET}\right), I, \delta\right)$$ 
	as follows.
	Let $x\in\Sigma\cup\{\varepsilon\}$, $u\in\Sigma^*$ then	
	\begin{align*}
		I := \begin{cases}
			 \left\{\left(q_I^{\min},\varepsilon,\true \right)\right\}, & \textit{ if } |Q^{\min}| = 1, \\ 
			 \left\{\left(q_I^{\min},\varepsilon,\false\right)\right\}, & \textit{ else.}
		\end{cases}
	\end{align*}
	Furthermore, let $q_i = \delta^{\min}(p,x_i)$ for  $i \in \{1,\dots,k\}$, then
	
	$$
		\delta_{\eqclass{ux}{}}\big((p,x,b)\big)
		:= 		
		\big(
		(q_1,x_1,b'_1), \dots, (q_k,x_k,b'_k)
		\big),		
	$$
	where $b'_i = \true$, if $b = \true$ or $q_i \in Q^{\min}_{\leaves}$, and $b'_i = \false$ otherwise.
	
	And finally, 	
	$$\delta_{\bot_{\eqclass{ux}{}}}\big((p,x,b)\big):=  \true \text{ if } b=\true.$$
	Thus, the second component $x$ of a state $(p,x,b)$ ensures that the transition $\A_{\min}$ has taken to reach $p$ corresponds to the equivalence class the current node is labeled with.	
	Furthermore Lemma~\ref{lemma:prefix-tree} states that each tree must contain a prefix corresponding to the minimal \DFA. This condition is tracked by the component $b$, which records whether the corresponding region of the prefix tree has already been completely traversed along the current path. In particular, $b$ is switched from $\false$ to $\true$ upon reaching a leaf of this prefix tree, i.e., an element of $Q^{\min}_{\leaves}$.	

	It remains to show, that $\TL=L\left(\AT\right)$. The direction from left to right follows immediately from the definition of the tree $t_\A$ for any \trimDFA\  $\A$ and the fact that there is a surjective homomorphism from $\A$ onto $\A_{\min}$. As $t_{\A}\in \TL$ every node of $t_\A$ is labeled with the Nerode class $\eqclass{w}{}$ corresponding to a length-lexicographic shortest word $w$ reaching a state of $\A$, and for each such $\eqclass{w}{}$ the $x_i$-successor corresponds to extending the word $w$ by $x_i$. Hence, along every root-to-node path $u$ the first component of $\AT$ simulates $\mathcal \A_{\min}$ on $u$.
	Moreover, since $\A$ is \trim, every Nerode-class occurs at least once among the inner nodes, so the component $b$ can mark all states from $Q^{\min}_{\leaves}$, and therefore $\AT$ accepts $t_{\A}$, i.e. $t_{\A}\in L\left(\AT\right)$.
	
	In the other direction let $t\in L\left(\AT\right)$ be a tree accepted by $\AT$. From $t$ we construct an \trimDFA\ $\A^t = \DFAT{t}$ as follows.
	Recall that $\operatorname{\inner}(t)$ is the set of inner nodes of $t$,  
	then $Q^t:=\inner(t)$,
	 $q^t_I:=\varepsilon$, and $Q^t_F:=\{w\mid t(w) \in L\}$.
	 For the latter recall  that the node label $t(w)$ is a representative of the equivalence class $\eqclass{w}{}$ over $L$.
	
	By definition of $Q^t$ every inner tree node, which is actually a word $w\in \Sigma^*$, is treated as a state, thus, $\delta^t$ needs to be compatible with the respective tree structure until reaching a state $u$ indicating that its $x$-child is a leaf in $t$ labeled with $\bot_{\eqclass{ux}{}}$. Recall that $\bot_{\eqclass{ux}{}}$ implies a run of a \DFA\ ending in a state $\eqclass{ux}{}$ already present in $t$. 
	Hence, $\A^t$ changes its state to one of the equally labeled states. Formally, let $u\in Q^t$, $x\in\Sigma$, $\operatorname{\dom}_v(t):=\{w\mid t(w)=v\}$ the projection of $\dom(t)$ onto a specific label $v$, and $w\in\operatorname{\dom}_{\eqclass{ua}{}}(t)$ one node labeled $\eqclass{ua}{}$, then 
	$$\delta^t(u,a):=\begin{cases}
		ua, \text{ if } ua\in\operatorname{\inner}(t);\\
		w, \text{ else}.
	\end{cases}$$
	It remains to show, that $L(\A^t)=L$. For that, let $h\colon Q^t\to Q^{\min}$ be defined as $h(u):=t(u)$, that is, each state of $\A^t$ (resp. node of $t$) is mapped onto its equivalence class, i.e.~ the state of the minimal automaton $\A_{\min}$ for $L$. Observe that $h$ is a surjective homomorphism with the condition $p\in Q^t_F \iff h(p)\in Q^{\min}_F$. Hence, $L(\A^t)=L$ holds.
\end{proof}
It is worth noting that the above characterization of $\TL$ as a regular tree language enables a reduction from the equivalence problem of \DFA\ onto the word problem of \TDDFTA, which is \LOGSPACE-complete (e.g.~in \cite{TATA}). However, as the reduction requires computing the minimal automaton, it lies well within known complexity bounds.

    \section{Corrections on Finite Automata}\label{sec:corrections}
    Recall that corrections on words are defined as a sequence of edit-operations acting on the linear structure of a source word (e.g., in \cite{aho1972minimum}). Corrections on non-linearly structured data, such as trees or graphs, can be defined similarly as a sequence of edit operations on nodes and edges. 
As mentioned in the introduction, we are here interested in a general approach, that is, we start by establishing the notion of a correction in the form of a mapping between a source automaton and a target \trimDFA. Note that the latter essentially reverses the ideas presented in \cite{tai1979treetotree}, where the author has shown that a sequence of edit operations transforming one tree into another defines a mapping between the two. 
We then extend this idea by showing that all corrections of a source automaton to possible targets can be represented finitely as a regular tree language. As discussed later (cf.\ Section~\ref{sec:didactics}), this can be used to establish a notion of students' intent by collecting feedback information in the form of edit-operations and filters.

We start with a very general notion of correction by formalizing structural differences between two automata via a function on their state sets, which then essentially serves as a partial embedding of the source automaton into the target.
\begin{definition}\label{def:corrections}
Let $\A = \FAT{\A}$ and $\B = \FAT{\B}$ be \NFA\ with $Q^\A\cap Q^\B=\emptyset$. A (possibly partial) injective mapping $\corr\colon Q^\A\to Q^\B$ 
  is called a \textit{correction} and the automaton corrected through $\corr$,  $\AB=\FAT{\AB}$ is defined as follows:		
		\begin{align*}
			Q^{\AB}  = &\;\dom(\corr) \cup \left(Q^\B\setminus\img(\corr)\right)\\
			\delta^{\AB} = &\; 
			\left\{(p,a,p')\mid p,p'\in\dom(\corr),\ \left(\corr(p),a,\corr(p')\right)\in\delta^\B\right\} \\
			& \cup \left\{(p,a,q)\mid p\in\dom(\corr), q\in\left(Q^\B\setminus\img(\corr)\right), (\corr(p),a,q)\in\delta^\B\right\}\\
			& \cup \left\{(q,a,p)\mid q\in\left(Q^\B\setminus\img(\corr)\right), p\in\dom(\corr), (q,a,\corr(p))\in\delta^\B\right\}\\
			& \cup \left\{(q,a,q')\mid q,q'\in(Q^\B\setminus\img(\corr)),\ (q,a,q')\in\delta^\B\right\}\\
			Q_{I}^{\AB}  = &\; \left\{p\in Q^{\A}\mid \corr(p)\in Q_{I}^{\B}\right\} \cup \left\{q\in (Q^\B\setminus\img(\corr)) \mid q\in Q_{I}^{\B}\right\} \\
			Q_{F}^{\AB}  = &\; \left\{p\in Q^{\A}\mid \corr(p)\in Q_{F}^{\B}\right\} \cup \left\{q\in (Q^\B\setminus\img(\corr)) \mid q\in Q_{F}^{\B}\right\} \\
		\end{align*} 
		Moreover, we also write $\corr(\A)$ to denote the corrected automaton $\AB$. 
\end{definition}
Note that the latter definition essentially states that the corrected automaton $\AB$ of a source automaton $\A$ is, up to renaming, the target automaton itself. Hence, the following corollary is an immediate consequence of Definition~\ref{def:corrections}.

\begin{corollary}\label{cor:isomorph}
	Let $\A$, $\B$ be finite automata, $\corr$ a correction, and $\AB$ the corrected automaton, then $\AB$ is isomorphic to $\B$.
\end{corollary}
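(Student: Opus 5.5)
The isomorphism is given by "undoing the renaming" that $\corr$ performs. Define $\varphi\colon Q^{\AB}\to Q^{\B}$ by
$$\varphi(r) := \begin{cases} \corr(r), & \text{if } r\in\dom(\corr),\\ r, & \text{if } r\in Q^\B\setminus\img(\corr).\end{cases}$$
This is well defined because $Q^\A\cap Q^\B=\emptyset$, so $\dom(\corr)\subseteq Q^\A$ and $Q^\B\setminus\img(\corr)$ are disjoint and the two cases never overlap.

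The first step is to show that $\varphi$ is a bijection. For injectivity, $\corr$ is injective on $\dom(\corr)$ and the identity is injective on the remainder. The two images are $\img(\corr)$ and $Q^\B\setminus\img(\corr)$, which are disjoint, so no collisions occur across the cases. For surjectivity, the union of these images is exactly $Q^\B$.

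The second step is to verify that $\varphi$ preserves and reflects transitions, i.e.\ $(r,a,r')\in\delta^{\AB}\iff(\varphi(r),a,\varphi(r'))\in\delta^\B$. We split into four cases according to whether $r$ and $r'$ lie in $\dom(\corr)$ or in $Q^\B\setminus\img(\corr)$. In each case, exactly one of the four sets in the definition of $\delta^{\AB}$ applies, and its defining condition is literally $(\varphi(r),a,\varphi(r'))\in\delta^\B$.

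The third step handles initial and final states in the same way. Here there is one small point of care: Definition~\ref{def:corrections} writes $p\in Q^\A$ with $\corr(p)\in Q_I^\B$. Since $\corr$ is partial, this condition implicitly forces $p\in\dom(\corr)$, and we read it that way. Then $r\in Q_I^{\AB}\iff\varphi(r)\in Q_I^\B$, and likewise for $Q_F$.

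None of these steps is a real obstacle; the argument is a direct unfolding of Definition~\ref{def:corrections}. The only places needing attention are the disjointness assumption, which makes $\varphi$ well defined, and the reading of the partial mapping in the definitions of $Q_I^{\AB}$ and $Q_F^{\AB}$. Finally, since the isomorphism preserves runs and acceptance, it also yields $L(\AB)=L(\B)$.
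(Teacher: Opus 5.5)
Your proposal is correct and takes the same approach as the paper, which only remarks that $\AB$ is $\B$ up to renaming each state of $\img(\corr)$ by its preimage under $\corr$. Your map $\varphi$ is the inverse of that renaming, and your case checks on $\delta^{\AB}$, $Q_I^{\AB}$, $Q_F^{\AB}$ write out the details the paper leaves implicit: the disjointness $Q^{\A}\cap Q^{\B}=\emptyset$, and reading $\corr(p)\in Q_I^{\B}$ as forcing $p\in\dom(\corr)$.
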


Nevertheless, Definition~\ref{def:corrections} already implies certain types of edit-operations (e.g.~insertion and deletion of states and transitions) associated with a correction $\corr$. 
In the following the above definition of correction will form the basis for introducing a framework to finitely represent and compute all corrections of a given source automaton into all possible deterministic target automata defining a given target language. 
  
We start by stating a further consequence of Corollary~\ref{cor:isomorph}, that is, if the target automaton $\B$ is an \trimDFA, so is $\corr(\A)$ for any finite automaton $\A$. Hence, from here on, we rely on the notation established in Section~\ref{sec:dfas_and_trees}. We fix an alphabet $\Sigma$, a regular language $L \subseteq \Sigma^*$, a source automaton to be corrected, not necessarily deterministic, $\A = \FAT{\A}$, and a target \trimDFA\ ${\B = \FAT{\B}}$.  With $\B$ we associate its tree unfolding $t_\B\colon D_\B\to \TREEALPHABET$, where $\left(\TREEALPHABET,\sigma_{\TREEALPHABET}\right)$ denotes the corresponding tree alphabet and $D_\B$ the tree domain. Recall from Section~\ref{sec:dfas_and_trees} that $D_\B$ is defined via the mapping $f_{\lelex}$, which assigns to each state of $\B$ the length-lexicographically shortest word reaching that state. Finally, we write $\TL$ for the set of all tree unfoldings of \trimDFA{}s accepting $L$.

We start by introducing a tree representation of a correction $\corr$ by annotating every node in the target automatons tree representation $t_\B$ with the corresponding states of the source $\A$.

\begin{definition}\label{def:corr_tree}
	Let $\A$ be the source automaton, $\B$ the \trimDFA\ target, $t_\B\colon D_\B \to \TREEALPHABET$ its tree unfolding, and $\corr$ a correction as fixed above. Furthermore, let $\corrAlph:=\TREEALPHABET\times\left(Q^{\A}\cup\{\#\}\right)$ with
	 $\sigma_{\corrAlph}((v,m)):=\sigma_{\TREEALPHABET}(v)$ be a new tree alphabet. A \textit{correction tree of $\B$ under $\corr$}, $t^{\corr}_{\B}\colon  D_\B \to \corrAlph$, is defined as a relabeling of $t_\B$, that is for every $x\in D_\B$, 
	\begin{align*}
		t^{\corr}_{\B}(x) = \begin{cases}
			(t_{\B}(x), q), &\text{if there exists } q\in\dom(\corr) \text{ such that } f_\lelex(\corr(q))=x,\\
			(t_{\B}(x), \#), &\text{else.}
		\end{cases}
	\end{align*}
\end{definition}
An example of a correction tree is shown in Figure~\ref{fix:example_correction_tree}. 
\begin{figure}
		\begin{subfigure}[b]{0.3\textwidth}
			\centering
			\begin{tikzpicture}[node distance=16mm, every state/.style={minimum size=4mm, inner sep=2pt}]
		\node[state,initial,initial text= ] (q0) at (-1,0) {$q_0$};
		\node[state,accepting] (q1) at (0.5,0) {$q_1$};
		\node[state] (q2) at (0,-1) {$q_2$};
		
		\node[state,initial,initial text= ] (p0) at (-4,-1) {$p_0$};
		\node[state,accepting] (p1) at (-3,0) {$p_1$};
		\node[state,accepting] (p2) at (-4.5,0) {$p_2$};
		
		\node at (-4, -1.75) {$\A$};
		\node at (0, -1.75) {$\B_{\EX{2}}$};
		
		\path[->] 
		(q0) edge [above] node {$0$} (q1)
		(q0) edge [below] node {$1$} (q2)
		(q1) edge [bend left, right] node {$a$} (q2)
		(q2) edge [left] node {$a$} (q1)
		(p0) edge [above] node {$a$} (p1)
		(p1) edge [bend left, right] node {$1$} (p0)
		(p1) edge [above] node {$a$} (p2)
		(p2) edge [loop left] node {$a$} (p2)
		
		(p1) edge[dashed, blue, bend right] node[below] {$\corr(p_1)$} (q0)
		(p0) edge[dashed, blue, bend right] node[below] {$\corr(p_0)$} (q2)
		;
	\end{tikzpicture}
\end{subfigure}
\hfill
\begin{subfigure}[b]{0.55\textwidth}
	\centering
	\resizebox{\linewidth}{!}{
		\begin{tikzpicture}[
			level distance=8mm,
			level 1/.style={sibling distance=42mm},
			level 2/.style={sibling distance=14mm},
			every node/.style={
				draw,
				rounded rectangle,
				align=center,
				inner sep=2pt
			},
			empty/.style={
				draw=none,
				inner sep=0pt,
				minimum size=1pt
			},
			dotted child/.style={
				edge from parent/.style={draw,dotted}
			},
			edge from parent/.style={draw}
			]
			
			\node (e) {$\eqclass{\varepsilon}{}, p_1$}
			child {
				node (p1) {$\eqclass{0}{}, \#$}
				child[dotted child] { node[dotted] {$\bot_{\eqclass{a}{}},\#$} }
				child[dotted child] { node[dotted] {$\bot_{\eqclass{a}{}},\#$}  }
				child { node (b1) {$\bot_{\eqclass{1}{}},\#$} }
			}
			child {
				node (p2) {$\eqclass{1}{}, p_0$}
				child[dotted child] { node[dotted] {$\bot_{\eqclass{a}{}},\#$} }
				child[dotted child] { node[dotted] {$\bot_{\eqclass{a}{}},\#$} }
				child { node (b1) {$\bot_{\eqclass{0}{}},\#$} }
			}
			child[dotted child] { node[dotted] (a) at(-2.5, 0) {$\bot_{\eqclass{a}{}},\#$}  };
		\end{tikzpicture}
}
\end{subfigure}
	\caption{Two exemplary \DFA{s}, $\A$ and $\B_{\textsf{ex2}}$, are shown on the left, together with a correction $\corr$ defined by $\corr(p_0) = q_2$ and $\corr(p_1) = q_0$, illustrated by the blue dotted lines. For clarity, we omit the sink state in both \DFA{s}. Furthermore, $\B_{\textsf{ex2}}$ is the minimal \DFA\ for the language ${L_{\A_\textsf{ex2}} = \{xa^n \in  \{0, 1, a\}^{*}\mid x=0, \text{ if } n \text{ is even and } x = 1 \text{ otherwise}\}}$. The correction tree $t^{\corr}_{\B{\textsf{ex2}}}$ for $\A$, $\B_{\textsf{ex2}}$, and $\corr$ (with the chosen alphabet ordering $0 < 1 < a$) is shown on the right. Nodes and edges associated with the sink state are dotted.}
	\label{fix:example_correction_tree}
\end{figure}
Hence, the latter definition simply adds the state $q\in\dom(\corr)$ to the inner node of $t_\B$ representing $\corr(q)$.
However, in contrast to Definition~\ref{def:corrections}, the tree representation $t^{\corr}_{\B}$ of $\corr$ loses information in the sense that several corrected automata $\mathcal{AB}$ may be obtained from it, depending on how the leaves $\bot_v$ of $t^{\corr}_{\B}$ are interpreted (cf.~ Proposition~\ref{prop:dfa_reg_tree_lang}).

The latter definition and in particular the observation that for a given tree unfolding $t_\B$ there are only finitely many correction trees $t^{\corr}_{\B}$ representing all possible domains of corrections $\corr$, gives rise to the main result of this section.

\begin{theorem}\label{thm:all_correct}
	Let $\A$ be a finite automaton and $L$ a regular language, then $$\ALL{\A}{L}:=\{t^{\corr}_{\B}\mid \B\textit{ is an \trimDFA\ with } L(\B)=L \text{ and } \corr \text{ is a correction}\}$$ is a regular tree language.
\end{theorem}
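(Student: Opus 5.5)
We first characterize the correction trees and then build a tree automaton for them on top of the \TDDFTA\ $\AT$ from Proposition~\ref{prop:dfa_reg_tree_lang}.

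\textbf{Characterization.} A tree $s\colon D\to\corrAlph$ lies in $\ALL{\A}{L}$ if and only if:
\begin{enumerate}[label=(\roman*)]
\item its projection $\pi_1(s)$ onto the first component lies in $\TL$;
\item every leaf (label $\bot_v$) carries $\#$ in its second component;
\item every state $q\in Q^{\A}$ occurs at most once as a second component.
\end{enumerate}

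For necessity, note that $t^{\corr}_{\B}$ is a relabeling of $t_\B\in\TL$, which gives (i). Each state $\corr(q)$ of $\B$ has $f_\lelex(\corr(q))\in f_\lelex(Q^\B)$, and these are exactly the inner nodes of $D_\B$, which gives (ii). Finally, $f_\lelex$ is injective on $Q^\B$, since distinct states have distinct shortest words. Composing with the injective $\corr$ shows that $q\mapsto f_\lelex(\corr(q))$ is injective, so two different nodes never carry the same $q$; this gives (iii). Definition~\ref{def:corr_tree} then shows each node carries at most one $q$.

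For sufficiency, take such an $s$. Since $\pi_1(s)=t_\B$ for some \trimDFA\ $\B$ accepting $L$, we may use the $\A^t$ construction in the proof of Proposition~\ref{prop:dfa_reg_tree_lang} to obtain $\B$ with $Q^\B=\inner(t)$ and $f_\lelex(u)=u$. Renaming the states of $\B$ makes them disjoint from $Q^\A$. Define $\corr(q):=u$ whenever $s(u)=(\cdot,q)$. By (iii) this is a well-defined partial injective map, by (ii) its image lies in $Q^\B$, and then $t^{\corr}_{\B}=s$.

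\textbf{Regularity.} We show that each of the three conditions defines a regular tree language and intersect them.
\begin{itemize}
\item Condition (i) is the inverse image of the regular language $\TL$ under the letter-to-letter relabeling $\pi_1$. Regular tree languages are closed under inverse relabeling: let $\AT$ read the first component and ignore the second.
\item Condition (ii) is a local check on leaf labels, and is clearly regular.
\item Condition (iii) is the step requiring care, because a top-down automaton must guess where each state of $Q^\A$ is used. Equip the automaton with states that are subsets $S\subseteq Q^\A$, meaning ``the subtree uses exactly the annotations in $S$''. The initial states are all subsets. At a node labeled $(v,m)$ with state $S$, the automaton nondeterministically splits $S\setminus\{m\}$ (requiring $m\in S$ if $m\neq\#$) into pairwise disjoint sets $S_1,\dots,S_k$ for the children. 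A leaf accepts only with state $\emptyset$. Since $Q^\A$ is finite this is a \TDNFTA, and disjointness of the split enforces at-most-once usage.
\end{itemize}
Intersecting the three automata by the product construction gives a \TDNFTA\ for $\ALL{\A}{L}$.

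The main obstacle is the sufficiency direction, specifically recovering a genuine \trimDFA\ $\B$ whose tree unfolding is exactly $\pi_1(s)$ on the same domain. This requires that the reconstruction $\A^t$ from Proposition~\ref{prop:dfa_reg_tree_lang} satisfies $t_{\A^t}=t$, that is, that $f_\lelex$ in $\A^t$ is the identity on inner nodes. One must check that back-edges from $\bot$-leaves never produce a length-lexicographically smaller word reaching an inner node. That proof only verified $L(\A^t)=L$, so this fact has to be argued separately. I would argue it using the length-lexicographic structure of the domain $f_\lelex(Q)$ together with the way $\TL$ is defined.
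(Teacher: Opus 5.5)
\paragraph{Assessment.} Your route is essentially the paper's, and it is correct once one step is repaired. Run $\AT$ on the first component, force $\#$ on leaves, use subset-valued states so that each $q\in Q^{\A}$ is used at most once, and intersect. Your version is more careful than the paper's sketch in two places. First, you state and prove the characterization (i)--(iii) in both directions, whereas the paper only asserts that the accepted trees encode partial injective maps. Second, you require the children's sets to be pairwise disjoint. The paper only forbids a node's own $q$ from reappearing below it, which read literally allows the same $q$ in two sibling subtrees. It also phrases the invariant per pair $(v,q)$, which is weaker than injectivity of $\corr$.

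\paragraph{The ``main obstacle''.} This is self-inflicted, and the way you propose to handle it fails as stated.

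\emph{The obstacle is not needed.} Condition (i) already gives you, by the definition of $\TL$, an \trimDFA\ $\B$ with $L(\B)=L$ and $t_\B=\pi_1(s)$, so you never have to reconstruct one. For this $\B$, $f_\lelex$ is injective on $Q^\B$, and its image $f_\lelex(Q^\B)$ is exactly $\inner(t_\B)$. Rename $\B$ apart from $Q^{\A}$, which changes neither $t_\B$ nor $f_\lelex$. Then set $\corr(q):=f_\lelex^{-1}(u)$ for the unique node $u$ with $s(u)=(\cdot,q)$. Conditions (ii) and (iii) make this a well-defined partial injective map into $Q^\B$, and $t^{\corr}_{\B}=s$ follows directly from Definition~\ref{def:corr_tree}.

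\emph{Your proposed fix fails for the paper's $\A^t$.} The identity $t_{\A^t}=t$ is false for $\A^t$ as defined in the proof of Proposition~\ref{prop:dfa_reg_tree_lang}, because there a back-edge may target any inner node with the right label. Take $L=\{a,b\}^*$ with $a<b$. The \DFA\ with $q_0\xrightarrow{a}q_0$, $q_0\xrightarrow{b}q_1$ and $q_1\xrightarrow{a,b}q_1$ has a tree with inner nodes $\varepsilon$ and $b$, and a leaf $a$. If $\A^t$ sends leaf $a$ to node $b$, then $b$ is reached by $a<_\lelex b$, and $t_{\A^t}$ has inner nodes $\varepsilon$ and $a$ instead. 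The identity only holds if every back-edge goes to a length-lexicographically smaller node. The canonical representative of the class is one such node, and it is inner by Lemma~\ref{lemma:prefix-tree}.

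\paragraph{Minor point.} In your necessity argument the two facts are attached to the wrong reasons. Injectivity of $q\mapsto f_\lelex(\corr(q))$ is what makes each node carry at most one $q$. That each $q$ occurs at most once is immediate, since $q$ determines its node.
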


\begin{proof}
	(\textit{sketch}) The proof is a straightforward extension of the one given for Proposition~\ref{prop:dfa_reg_tree_lang}. To avoid repeating technical details, we here only sketch the changes to the tree automaton $\AT$ constructed for $\TL$.
	
	Recall that $\AT$ makes use of the fact that for every \trimDFA\ for $L$ there is a surjective homomorphism onto the minimal \DFA\ $\A_{\min}$ for $L$. It does so by ensuring that runs along every path of $t$ are compatible with $\A_{\min}$ and that the number of inner nodes of $t$ is at least as large as the number of Nerode classes.	We now extend this automaton to an \TDNFTA\ $\AT^{\corr}$
	over the alphabet $\corrAlph:=\TREEALPHABET\times\left(Q^{\A}\cup\{\#\}\right)$ (see Definition~\ref{def:corr_tree}).

	The automaton $\AT^{\corr}$ simulates $\AT$ on the first component of each label.
	It remains only to check that the second component indeed describes a partial injective function from the source automaton $\A$ to the represented target automaton. More precisely, for every $q\in Q^{\A}$ and every $v\in\TREEALPHABET$, there is at most one inner node labeled with $(v,q)$ in every accepted tree.
	
	Since $Q^{\A}$ is finite we simply store subsets of $Q^{\A}$ in the states of $\AT^{\corr}$. 
	The automaton then guesses at each node the set of states from $Q^{\A}$ that occur in the subtree rooted at that node. 
	It then verifies locally that the sets guessed for the children are correct and that, if the current node is labeled by some $q\in Q^{\A}$, then $q$ does not occur in any child subtree. For leaves, the automaton additionally checks that the second component is $\#$. 
	
	Thus, $\AT^{\corr}$ accepts all trees of $\TL$ labeled with states from $Q^{\A}$ encoding a partial injective function $\corr$. Hence, $\ALL{\A}{L}$ is a regular tree language.	
\end{proof} 

Theorem~\ref{thm:all_correct} is a solution for the \AllCorrect-problem stated in the introduction. However, note that strictly, if you define a correction as a sequence of edit-operations, all corrections for instance may also contain multiple sequences of deleting and adding same components in the automaton. In that sense, the solution presented here can be seen as normalized sequences of edit-operations restricted by the size of the target automaton.

    \section{A Didactical Perspective on Corrections: Edit-Operations  and Filters}\label{sec:didactics}    
    Section~\ref{sec:corrections} lays the theoretical foundations of corrections on finite automata. There, Theorem~\ref{thm:all_correct} offers a constructive solution to the \AllCorrect problem by computing a tree automaton $\AT^{\corr}$ that accepts the language of all possible corrections of an \NFA\ into a regular language $L$, given by its minimal \DFA. Our main focus here is how to leverage the seemingly arbitrary information stored in $\AT^{\corr}$. Therefore, two questions arise from that and the motivation described in the introduction, namely, \textit{what kind of feedback information can be computed from a given correction tree}, and \textit{how can we filter the finite representation of all correction trees $\AT^{\corr}$ for certain candidates}. 

\paragraph{Edit-Operations}

Recall that our goal is to present certain feedback information to student's solution on how to improve their attempt. We encode this information by so-called edit-operations (e.g.~inserts and deletions of states and transitions). Note that the latter is already implied in Definition~\ref{def:corrections}. We begin with explicitly defining these edit-operations for a given correction tree.

\begin{definition}\label{def:edit_ops}
	Let $\A=\FAT{A}$ be the source automaton, $L$ a regular language, and $t\in T^{\mathcal{A}}_L$ a correction tree in the tree language of all corrections of $\A$ over the alphabet $\corrAlph$.
	With $t$ we associate the following edit-operations on states of $\A$: 
			\begin{itemize}
					\item every state $q\notin \{p\in Q^\A\mid \exists x\in \inner(t)\text{ with } t(x)=(v,p)\}$ defines a \textbf{deletion of state} $q$ from $\A$, 
					\item every state $q\in \{p\in Q^\A\mid \exists x\in \inner(t)\text{ with } t(x)=(v,p)\}$ defines a \textbf{kept state} $q$ of $\A$, and  
					\item every node $x\in\inner(t)$ with $t(x)=(v,\#)$ defines an \textbf{insertion of state} $x$ into $\A$. 
					
			\end{itemize}
	As for the transitions of $\A$ we associate the following edit-operations:
			\begin{itemize}
				\item every transition $(p,a,q)\in\delta^{\A}$ for which
				\begin{enumerate}[label=(\arabic*)]
					\item there is no consecutive pair of nodes $x,xa\in\inner(t)$ with $t(x)=(v,p)$ and $t(xa)=(w,q)$ defines a \textbf{deletion of transition} $(p,a,q)$ from $\A$,
					\item there is a consecutive pair of nodes $x,xa\in\inner(t)$ with $t(x)=(v,p)$ and $t(xa)=(w,q)$ defines a \textbf{kept transition} $(p,a,q)$ of $\A$,
				\end{enumerate}
				\item every consecutive pair of nodes $x,xa\in\inner(t)$ with
				\begin{enumerate}[label=(\arabic*)]
					\item $t(x)=(v,p)$ and $t(xa)=(w,q)$ where $(p,a,q)$ is not a kept transition defines an \textbf{insertion of a transition} $(p,a,q)$ into $\A$,				
					\item $t(x)=(v,p)$ and $t(xa)=(w,\#)$ defines an \textbf{insertion of a transition} $(p,a,xa)$ into $\A$,
					\item $t(x)=(v,\#)$ and $t(xa)=(w,q)$ defines an \textbf{insertion of a transition} $(x,a,q)$ into $\A$, and
					\item $t(x)=(v,\#)$ and $t(xa)=(w,\#)$ defines an \textbf{insertion of a transition} $(x,a,xa)$ into $\A$.
				\end{enumerate}
		\end{itemize}
	The final states of $\A$ are changed with respect to the edit-operations on states, that is:
		\begin{itemize}
			\item every kept state $q\in Q^\A$ with the respective tree node $x\in\inner(t)$ with $t(x)=(v,q)$ is 
			\begin{enumerate}[label=(\arabic*)]	
				\item an \textbf{unmarked final} state  $q$ of $\A$, if $q\in Q^{\A}_F$ and the node label $v=\eqclass{w}{L}$ such that $w\not\in L$, 
				\item a \textbf{kept final} state  $q$ of $\A$, if $q\in Q^{\A}_F$ and the node label $v=\eqclass{w}{L}$ such that $w\in L$,
				\item a \textbf{marked final} state $q$ of $\A$, if $q\not\in Q^{\A}_F$ and the node label $v=\eqclass{w}{L}$ such that $w\in L$, and
			
			\end{enumerate}
			\item every inserted state $x\in\inner(t)$ is an \textbf{inserted final} state $x$ into $\A$, if $t(x)=(v,\#)$ and $v=\eqclass{w}{L}$ such that $w\in L$.
		\end{itemize} 
		Finally, the initial states of $\A$ are changed with respect to the label of the root of $t$, that is:
		
		\begin{itemize}
			\item a state $q\in Q^{\A}$ is called 		 
			\begin{enumerate}[label=(\arabic*)]	
					\item an \textbf{unmarked initial} state of $\A$, if $q\in Q^{\A}_I$ and $t(\varepsilon)\neq(\varepsilon,q)$,
					\item a \textbf{kept initial} state of $\A$, if $q\in Q^{\A}_I$ and $t(\varepsilon)=(\varepsilon,q)$,
					\item a \textbf{marked initial} state of $\A$, if  $t(\varepsilon)=(\varepsilon,q)$, and
			\end{enumerate}
			\item if $t(\varepsilon)=(\varepsilon,\#)$, then $\varepsilon$ is the \textbf{inserted initial} state of $\A$.
		\end{itemize}
\end{definition}
Without going into technical details, observe that the edit-operations defined above are essentially a ``natural language'' description of the construction of the corrected automaton given in Definition~\ref{def:corrections}. However, there are two major differences between that perspective and Definition~\ref{def:corrections}. First of all, the source automaton is always determinized, as the correction trees representing \trimDFA{s}. And secondly, the resulting automata from the given description of edits is incomplete for the reason that one tree unfolding $t$, and therefore a correction tree, represents several possible target automata only differs in how transitions are inserted for the leaves of $t$ labeled with $\bot_{\eqclass{u}{L}}$. We resolve this ambiguity, as already addressed in Proposition~\ref{prop:dfa_reg_tree_lang} by picking one of the possibly represented target automata and marking all added transitions as \textbf{inserted}. 

As an example, consider the correction tree $t^{\corr}_{\B{\EX{2}}}$ shown in Figure~\ref{fix:example_correction_tree}. 
According to Definition~\ref{def:edit_ops}, this tree induces the following edit-operations on the source automaton $\A$ (left in Figure~\ref{fix:example_correction_tree}): The state $p_2$ is \textbf{deleted}, while the states $p_0$ and $p_1$ are kept, and a new final state $q_1$ is \textbf{inserted}. 
	Note that deleting the state $p_2$ also implicitly removes it from the set of final states. 
	Furthermore, $p_1$ becomes an \textbf{unmarked final} state, $p_0$ is an \textbf{unmarked initial} state, and $p_1$ is a \textbf{marked initial} state. 
	Regarding the transitions $(p_0,a,p_1)$, $(p_1,a,p_2)$, and $(p_2,a,p_2)$ are \textbf{deleted}, whereas the transition $(p_1,1,p_0)$ is kept. 
	In addition, the transitions $(p_1,0,q_1)$, $(q_1,a,p_2)$, and $(p_2,a,q_1)$ are \textbf{inserted}.
For readability, edit-operations associated with the sink state are omitted.

Edit-operations are classically measured via the Levenshtein distance on strings \cite{Levenshtein_SPD66}. In contrast, we do not allow replacement operations (cf. Definition~\ref{def:edit_ops}), although replacing a transition $(p,a,q)$ by $(p,a,q')$ could naturally be viewed as a single operation rather than as the deletion of $(p,a,q)$ followed by the insertion of $(p,a,q')$. Throughout this work, however, we restrict ourselves to insertions and deletions only.

Hence, we define a distance function $\Dist : \ALL{\A}{L} \to \mathbb{N}$ on correction trees as the number of all \emph{non-kept} edit-operations, ignoring edits involving the sink state for simplicity. For the example above, this yields $\Dist\left(t^{\corr}_{\B{\EX{2}}}\right)=11$.

\paragraph{Filter}
After elaborating how the information stored in a correction tree can be encoded as feedback, we turn to the question of computing the "right" feedback. Before doing so, note that the following discussion serves as a key motivation for the admittedly quite technical formalization of the \AllCorrect problem as a regular tree language. This representation enables the systematic filtering and selection of feedback candidates, as described below. Further note that the following is a technical description of the filtering process, where the presented filters serve as illustrative best-practice examples of the kinds of feedback selection that the framework supports, rather than an empirical analysis of their effect on student outcomes.

The method of filtering aligns with established techniques in fields such as the disambiguation of context-free grammars \cite{klint1994using}, XML query discovery \cite{niehren2013query}, data completion in tabular formats using existing cell values \cite{wang2017synthesis}, and program synthesis \cite{krogmeier2020decidable}, where a candidate space, represented as a regular tree language, is pruned using positive and negative examples or specific structural constraints.

Formally, a \textit{filter} is a subset $F \subseteq T^\mathcal{A}_L$ of the set of all corrections, typically defined by a structural or semantic constraint on correction trees. We say a filter $F$ is \textit{regular} if $F$ is a regular tree language, in which case $T^\mathcal{A}_L \cap F$ is again a regular tree language by closure under intersection. In the following, we define several didactically motivated filters and show that they are regular.

Let $\A$ be a source automaton, $\mathcal{T}^L_A$ the set of all corrections and $k \in \mathbb{N}$, then 
\begin{itemize}
	\item $\ED{k} := \{t\in \mathcal{T}^L_A \mid \Dist(t) \leq k\}$ denotes a filter limiting corrections to a maximal number of edit-operations,
	\item 	$\MINDFA := \{t\in \mathcal{T}^L_A \mid t = t_{\A^{\min}}, \textit{ where } \A^{\min} \textit{ is the minimal } \DFA \text{ of } L\}$ 
	denotes a filter selecting only corrections onto the minimal \DFA, and
	\item $\NATURAL := \left\{t\in \mathcal{T}^L_A \mid \text{for all } v \in \inner(t) \text{ with } t(v)=(\eqclass{w}{L},q) \text{ we have } q = \# \text{ or } f_{\lelex}(q) \in \eqclass{w}{L} \right\}$ 	
	
	denotes a filter that restricts corrections to those that do not use arbitrary state mappings, but only mappings consistent with the canonical structure induced by $f_{\lelex}$, thereby preserving the student's interpretation of the existing states.
\end{itemize}
There are some further natural candidates taken from well known results of the field of tree-edit-distances not further explored here, as for instance restricting the corrections in Definition~\ref{def:corrections} to those satisfying the properties of tree-edit-distance mappings \cite{tai1979treetotree}. 
However, we continue by sketching the regularity of the filters given above.

It is well known that every nondeterministic top-down tree automaton can be transformed into an equivalent bottom-up tree automaton without increasing the number of states \cite{TATA}. 
Hence, the filter $\ED{k}$ can be recognized by a bottom-up tree automaton equivalent to $\AT^{\corr}$, which counts the number of non-kept edit-operations along the tree and accepts exactly those trees for which this number does not exceed $k$.
The construction of $\AT^{\corr}$ already enforces the embedding of the prefix tree of the minimal \DFA\ (see Proposition~\ref{prop:dfa_reg_tree_lang}). 
Therefore, $\MINDFA$ can be recognized by additionally requiring that the automaton rejects all trees extending beyond this prefix tree.
Finally, $\NATURAL$ is regular since the condition on state mappings only depends on the locally checkable condition $f_{\lelex}(q) \in \eqclass{w}{L}$ of the tree, which can therefore be verified directly by the automaton.

Since all defined filters are regular tree languages and hence closed under intersection, arbitrary combinations of such filters can be used to realize different didactic objectives. 
The size of the resulting set of correction trees depends strongly on the chosen filter. For $\MINDFA$ it is finite with
$\sum_{j=0}^{\min(m,n)} \binom{n}{j} \frac{m!}{(m-j)!}$ elements, where $m$ is the number of states of the minimal \DFA\ and $n = |Q^{\A}|$; for $\ED{k}$ it is finite for every $k \in \mathbb{N}$; and for $\NATURAL$ it is infinite, since arbitrarily large target automata satisfying the mapping constraint may exist. 
However, even after combining multiple filters, the resulting language typically still contains more than one correction tree. In our current implementation\footnote{\url{https://github.com/maurice-herwig/alcep_dfa}}, a single correction is selected randomly from the filtered language.
From a didactic perspective, this is a natural direction for future work.

\paragraph{Practice}
After defining the filters and discussing their properties, we analyze their effects on a real test set.
Specifically, we consider the following task:
\begin{center}
	\begin{tabular}{l}
		Let $L_{\EX{2}} = \{xa^n \in  \{0, 1, a\}^*\mid x=0, \text{ if } n \text{ is even and } x = 1 \text{ otherwise}\}$ \\
		Construct a finite automaton $\A$ such that $L(\A) = L_{\EX{2}}$. 
	\end{tabular}
\end{center}
Note that this is the same language used in the example of Section~\ref{sec:corrections},
where the minimal \DFA\ for $L_{\EX{2}}$ was shown in Figure~\ref{fix:example_correction_tree} (Automaton $\B_{\EX{2}}$).
The task is a prototypical exercise in learning finite automata. It was assigned as a homework problem in the summer terms of $2020$ and $2021$ in a second-semester computer science bachelor's course on formal languages. Only $103$ out of $151$ submissions were correct, underscoring the importance of feedback. Although the exercise did not explicitly require a \DFA, $137$ students submitted one. Remarkably, $9$ of the $14$ non-\DFA\ submissions were incorrect.
The most common error, made by eight students, yielded an automaton for $L_{\EX{2}} \cup \{\varepsilon\}$. As this mistake can be corrected easily, we instead consider the second most common error (made by five students), shown in Figure~\ref{fig:example_dfas:2}. For readability, the sink state is omitted. Similarly, in the correction trees presented below, nodes associated to the sink state are indicated only by dotted lines and are excluded from the computation of $\Dist$.
\begin{figure}[t]
	\centering
	\begin{subfigure}[b]{0.45\textwidth}
		\centering
		\begin{tikzpicture}[node distance=16mm, every state/.style={minimum size=4mm, inner sep=2pt}]
			\node[state,initial,initial text=] (q0) at (0,0) {$p_0$};
			\node[state] (q1) at (1,0.4) {$p_1$};
			\node[state] (q2) at (1,-0.4) {$p_2$};
			\node[state] (q3) at (2.25,0.4) {$p_3$};
			\node[state,accepting] (q4) at (2.25,-0.4) {$p_4$};
			\node[state,accepting] (q5) at (3.5,0.4) {$p_5$};
			
			\path[->] 
			(q0) edge [above] node {$0$} (q1)
			(q0) edge [below] node {$1$} (q2)
			(q2) edge [bend left, above] node {$a$} (q4)
			(q4) edge [below] node {$a$} (q2)
			(q1) edge [above] node {$a$} (q3)
			(q3) edge [above] node {$a$} (q5)
			(q5) edge [below, bend left] node {$a$} (q3)
			;
		\end{tikzpicture}
		\caption{$\A^1_{\Sub}$}
		\label{fig:example_dfas:2}
	\end{subfigure}
	\hfill
	\begin{subfigure}[b]{0.45\textwidth}
		\centering
		\begin{tikzpicture}[node distance=16mm, every state/.style={minimum size=4mm, inner sep=2pt}]
			\node[state,initial,initial text= ] (q0) at (0,0) {$p_0$};
			\node[state,accepting] (q1) at (1.25,0) {$p_1$};
			
			\path[->] 
			(q0) edge [bend left, above] node {$a$} (q1)
			(q1) edge [bend left, below] node {$a$} (q0)
			;
		\end{tikzpicture}
		\caption{$\A^2_{\Sub}$}
		\label{fig:example_dfas:3}
	\end{subfigure}
	\caption{Two student submissions for the task of constructing a finite automaton recognizing $L_{\EX{2}}$.}
	\label{fig:example_dfas}
\end{figure}
Figure~\ref{fig:filterexample1} shows the corrections obtained by applying different filters to this student submission, using the fixed alphabet order $a < 0 < 1$. Figure~\ref{fig:filterexample1:1} shows the only correction tree $t_1$ contained in $\ALL{\A^1_{\Sub}}{L_{\EX{2}}} \cap \ED{1} \cap \NATURAL$. In this correction, only state \textit{$p_1$ is marked as a final state}.

Figure~\ref{fig:filterexample1:2} illustrates one possible correction tree that transform $\A^1_{\Sub}$ into the minimal \DFA\ $\A$ (see Figure~\ref{fix:example_correction_tree}), whereas Figure~\ref{fig:filterexample1:3} shows the correction tree $t_3$ that transforms $\A^1_{\Sub}$ into $\A$ using the minimum number of edits.
The correction tree $t_2$ induces the following correction: \textit{delete states $p_1$, $p_3$, $p_4$, and $p_5$, together with the six transitions incident to these states; insert the final state $q_1$; and insert the transitions $(p_0, 0, q_1)$, $(q_1, a, p_2)$, and $(p_2, a, q_1)$}.
In contrast, the correction tree $t_3$ induces the following correction: \textit{delete states $p_1, p_3$ and $p_5$, together with the four transitions incident to these states; and insert the transition $(p_0, 0, p_4)$}. Consequently, we obtain $\Dist(t_2)=14$ and $\Dist(t_3)=8$.
\begin{figure}[t]
	\centering
	\begin{subfigure}[b]{0.35\textwidth}
		\resizebox{\linewidth}{!}{
			\begin{tikzpicture}[
				level distance=8mm,
				level 1/.style={sibling distance=40mm},
				level 2/.style={sibling distance=10mm},
				level 3/.style={sibling distance=10mm},
				every node/.style={
					draw,
					rounded rectangle,
					align=center,
					inner sep=2pt
				},
				empty/.style={
					draw=none,
					inner sep=0pt,
					minimum size=1pt
				},
				dotted child/.style={
					edge from parent/.style={draw,dotted}
				},
				edge from parent/.style={draw}
				]
				
				\node (e) {$\eqclass{\varepsilon}{}, p_0$}
				child {
					node (p1) {$\eqclass{0}{}, p_1$}
					child[dotted child] { node[empty] {} }
					child[dotted child] { node[empty] {} }
					child {
						node (p3l) {$\eqclass{1}{}, p_3$}
						child[dotted child] { node[empty] {} }
						child[dotted child] { node[empty] {} }
						child { node (b0) {$\bot_{\eqclass{0}{}}, \#$} }
					}
				}
				child {
					node (p2) {$\eqclass{1}{}, p_2$}
					child[dotted child] { node[empty] {} }
					child[dotted child] { node[empty] {} }
					child {
						node (p3r) {$\eqclass{0}{}, p_3$}
						child[dotted child] { node[empty] {} }
						child[dotted child] { node[empty] {} }
						child { node (b1) {$\bot_{\eqclass{1}{}},\#$} }
					}
				}
				child[missing];
				
				\draw[dotted]
				(e.south east)
				-- ++(4mm,-4mm);
			\end{tikzpicture}
		}
		\caption{$t_1 \in\ALL{\A^1_{\Sub}}{L_{\EX{2}}} \cap \ED{1} \cap \NATURAL$}
		\label{fig:filterexample1:1}
	\end{subfigure}
	\hfill
	\begin{subfigure}[b]{0.3\textwidth}
		\resizebox{\linewidth}{!}{
			\begin{tikzpicture}[
				level distance=8mm,
				level 1/.style={sibling distance=30mm},
				level 2/.style={sibling distance=10mm},
				every node/.style={
					draw,
					rounded rectangle,
					align=center,
					inner sep=2pt
				},
				empty/.style={
					draw=none,
					inner sep=0pt,
					minimum size=1pt
				},
				dotted child/.style={
					edge from parent/.style={draw,dotted}
				},
				edge from parent/.style={draw}
				]
				
				\node (e) {$\eqclass{\varepsilon}{}, p_0$}
				child {
					node (p1) {$\eqclass{0}{}, \#$}
					child[dotted child] { node[empty] {} }
					child[dotted child] { node[empty] {} }
					child { node (b1) {$\bot_{\eqclass{1}{}},\#$} }
				}
				child {
					node (p2) {$\eqclass{1}{}, p_2$}
					child[dotted child] { node[empty] {} }
					child[dotted child] { node[empty] {} }
					child { node (b1) {$\bot_{\eqclass{0}{}},\#$} }
				}
				child[missing];
				
				\draw[dotted]
				(e.south east)
				-- ++(4mm,-4mm);
				
			\end{tikzpicture}
		}
		\caption{$t_2 \in \ALL{\A^1_{\Sub}}{L_{\EX{2}}} \cap \MINDFA$}
		\label{fig:filterexample1:2}
	\end{subfigure}
	\hfill
	\begin{subfigure}[b]{0.3\textwidth}
		\resizebox{\linewidth}{!}{
			\begin{tikzpicture}[
				level distance=8mm,
				level 1/.style={sibling distance=30mm},
				level 2/.style={sibling distance=10mm},
				every node/.style={
					draw,
					rounded rectangle,
					align=center,
					inner sep=2pt
				},
				empty/.style={
					draw=none,
					inner sep=0pt,
					minimum size=1pt
				},
				dotted child/.style={
					edge from parent/.style={draw,dotted}
				},
				edge from parent/.style={draw}
				]
				
				\node (e) {$\eqclass{\varepsilon}{}, p_0$}
				child {
					node (p1) {$\eqclass{0}{}, p_4$}
					child[dotted child] { node[empty] {} }
					child[dotted child] { node[empty] {} }
					child { node (b1) {$\bot_{\eqclass{1}{}},\#$} }
				}
				child {
					node (p2) {$\eqclass{1}{}, p_2$}
					child[dotted child] { node[empty] {} }
					child[dotted child] { node[empty] {} }
					child { node (b1) {$\bot_{\eqclass{0}{}},\#$} }
				}
				child[missing];
				
				\draw[dotted]
				(e.south east)
				-- ++(4mm,-4mm);
				
			\end{tikzpicture}
		}
		\caption{$t_3 \in\ALL{\A^1_{\Sub}}{L_{\EX{2}}} \cap \ED{8} \cap \MINDFA$}
		\label{fig:filterexample1:3}
	\end{subfigure}
	\caption{Correction trees for $\A_{\Sub}^{1}$ for different filters.}
	\label{fig:filterexample1}
\end{figure}

This example demonstrates that different filters may yield substantially different corrections and that, from a pedagogical perspective, correcting a student submission towards a fixed target solution, such as the minimal \DFA, is not always the most appropriate choice. Instead, $t_1$ preserves the student's intended construction and corrects only the underlying mistake.

However, combining the $\NATURAL$ filter with the minimum-edit criterion does not always yield the most appropriate correction. To illustrate this, we consider a second student submission, shown in Figure~\ref{fig:example_dfas:3}. Applying the $\NATURAL$ filter and selecting a correction tree with minimum edit distance results in the correction tree $t_4$, shown in Figure~\ref{fig:filterexample2:2}. The correction induced by $t_4$ retains only the initial state, deleting all remaining states and transitions and inserting the entire minimal \DFA.In contrast, the correction tree $t_5$ preserves the student's apparent intention by retaining the $aa$-loop.

These two examples demonstrate that no single filter consistently yields the didactically most suitable correction. Rather, the choice of an appropriate filter strongly depends on students' submission.
\begin{figure}[t]
	\centering
	\begin{subfigure}[b]{0.3\textwidth}
		\resizebox{\linewidth}{!}{
			\begin{tikzpicture}[
				level distance=7mm,
				level 1/.style={sibling distance=30mm},
				level 2/.style={sibling distance=10mm},
				every node/.style={
					draw,
					rounded rectangle,
					align=center,
					inner sep=2pt
				},
				empty/.style={
					draw=none,
					inner sep=0pt,
					minimum size=1pt
				},
				dotted child/.style={
					edge from parent/.style={draw,dotted}
				},
				edge from parent/.style={draw}
				]
				
				\node (e) {$\eqclass{\varepsilon}{}, p_0$}
				child {
					node (p1) {$\eqclass{0}{}, \#$}
					child[dotted child] { node[empty] {} }
					child[dotted child] { node[empty] {} }
					child { node (b1) {$\bot_{\eqclass{1}{}},\#$} }
				}
				child {
					node (p2) {$\eqclass{1}{}, \#$}
					child[dotted child] { node[empty] {} }
					child[dotted child] { node[empty] {} }
					child { node (b1) {$\bot_{\eqclass{0}{}},\#$} }
				}
				child[missing];
				
				\draw[dotted]
				(e.south east)
				-- ++(4mm,-4mm);
				
			\end{tikzpicture}
		}
		\caption{$t_5 \in\ALL{\A^2_{\Sub}}{L_{\EX{2}}} \cap \NATURAL$}
		\label{fig:filterexample2:2}
	\end{subfigure}
	\hfil
		\begin{subfigure}[b]{0.3\textwidth}
		\resizebox{\linewidth}{!}{
			\begin{tikzpicture}[
				level distance=7mm,
				level 1/.style={sibling distance=30mm},
				level 2/.style={sibling distance=10mm},
				every node/.style={
					draw,
					rounded rectangle,
					align=center,
					inner sep=2pt
				},
				empty/.style={
					draw=none,
					inner sep=0pt,
					minimum size=1pt
				},
				dotted child/.style={
					edge from parent/.style={draw,dotted}
				},
				edge from parent/.style={draw}
				]
				
				\node (e) {$\eqclass{\varepsilon}{}, \#$}
				child {
					node (p1) {$\eqclass{0}{}, p_1$}
					child[dotted child] { node[empty] {} }
					child[dotted child] { node[empty] {} }
					child { node (b1) {$\bot_{\eqclass{1}{}},\#$} }
				}
				child {
					node (p2) {$\eqclass{1}{}, p_0$}
					child[dotted child] { node[empty] {} }
					child[dotted child] { node[empty] {} }
					child { node (b1) {$\bot_{\eqclass{0}{}},\#$} }
				}
				child[missing];
				
				\draw[dotted]
				(e.south east)
				-- ++(4mm,-4mm);
				
			\end{tikzpicture}
		}
		\caption{$t_4 \in\ALL{\A^2_{\Sub}}{L_{\EX{2}}} \cap \MINDFA$}
		\label{fig:filterexample2:1}
	\end{subfigure}
	\caption{Correction trees for $\A_{\Sub}^{2}$ for different filters.}
	\label{fig:filterexample2}
\end{figure}

    \section{Conclusion}\label{sec:conclusion}
    Based on an encoding of all \trimDFA{}s recognizing a regular target language
$L$ as a regular tree language, we have shown that the set of all corrections that transform a given \DFA\ into a \DFA\ for
$L$ can also be characterized as a regular tree language. As this representation is primarily motivated by the use of corrections as a basis for didactic feedback in exercises on finite automata, it enables effective filtering of all conceptually valid ways in which an incorrect \DFA\ can be repaired, rather than relying on a single sample solution.
Furthermore, we have shown by example that the latter offers the possibility to model a student's solution intent as a regular tree language and thus, using closure properties, to compute personalized admissible feedback on a solution attempt.
However, as already stated in Section~\ref{sec:dfas_and_trees}, the results summarized above are currently limited to deterministic automata. Before turning to future research directions and applications, we briefly discuss the reasons for this limitation and possible ways to incorporate \NFA{}s into the given framework.

\paragraph{NFAs}

As mentioned before, the framework presented here for solving the \AllCorrect\ problem can be interpreted as a basis for extensions to other formalisms, for instance, from a didactically relevant perspective, logical theories or other types of automata such as \NFA{}s. Essentially, it relies on a finite canonical representation of all target instances of the given formalism. For \DFA{}s this is provided by the Nerode equivalence and the minimal \DFA\ derived from it. There are two obvious reasons why there is no straightforward extension of this framework to nondeterministic finite automata. First, \NFA{}s lack the property of having a unique canonical representative. Second, as already pointed out in Section~\ref{sec:dfas_and_trees}, the \DFA-equivalence problem can be reduced to the word problem of the regular tree language encoding of all \DFA{}s for a given language. From a complexity standpoint, this reduction lies well within the known polynomial complexity bounds for \DFA-equivalence. However, a direct translation of all \NFA{}s accepting a language $L$ into a regular tree language would enable a reduction of \NFA-equivalence, which is known to be a PSPACE-complete problem, to the word problem of regular tree languages, which is in P. 
The latter already hints at an obvious way of incorporating \NFA{}s into this framework, namely by simply choosing an exponential encoding by determinizing them. From a didactic perspective, however, this would require a more involved notion of edit-operations in order to correct a given source \NFA\ towards the intended \NFA\ rather than the corresponding \DFA.

Moreover, two further possible extensions to \NFA{}s directly address the need for a canonical representative in the above framework. For instance, residual automata (\RFA) were introduced for the same reason by restricting the definition of a \NFA\ $\A$ such that for every state $q$ there is a word $u$ with $L_q=u^{-1}L(\A)$. Hence, such automata immediately enable a similar tree encoding as the one introduced in Section~\ref{sec:dfas_and_trees} and furthermore, for every regular language $L$ there is a unique minimal residual automata \cite{DBLP:conf/stacs/DenisLT01}. However, besides the reason that computing the minimal residual automata is PSPACE-complete, from a didactical point of view, structurally correcting students solution towards \RFA\ seems not be the most ``natural'' and therefore suited feedback.

Another possible direction is to use the universal automaton (\UA) as a canonical structure. 
The \UA\ contains, as subautomata, many structurally distinct \NFA{s} recognizing the same language and, in particular, all minimal \NFA{s} \cite{lombardy2008universal}. Corrections could then be derived analogously to Definition~\ref{def:corrections} by mapping states of a student automaton into suitable subautomata of the \UA. Consequently, this approach restricts corrections to target automata whose number of states does not exceed the size of the \UA, which from a didactical perspective may still be sufficient, since the \UA\ already contains all minimal \NFA{s} recognizing the target language. However, the \UA\ may already be exponentially larger than the corresponding minimal \DFA\ for the same language \cite{lombardy2008universal}. Moreover, unlike in the deterministic setting, candidate subautomata additionally need to be checked for language equivalence, which for \NFA{s} is PSPACE-complete.

\paragraph{Future Work}
From a theoretical perspective and besides the above mentioned extensions to \NFA\ future work mainly concerns the extension of the above framework to related problems arising in educational settings, such as logical theories. 

From a didactical perspective, there are several research directions and applications of the set of all corrections presented here. Most importantly, many repair-based feedback systems for programming assignments \cite{price2016generating,rivers2017data} depend on prior student submissions or collections of reference solutions. The latter is known as the so-called \emph{cold-start problem}. Then the set of all corrections can be seen as an approach to overcome the cold-start problem.


Finally, the practical evaluation of the quality of the resulting feedback needs to be empirically investigated to assess how different classes of filtered corrections are perceived by students and how they influence understanding and learning outcomes, in comparison to boolean feedback, counterexamples, and conceptual hints \cite{d2015can}.

    %
    %
    %

    \bibliographystyle{eptcs}
    \bibliography{literatur}

\begin{thebibliography}{10}
\providecommand{\bibitemdeclare}[2]{}
\providecommand{\surnamestart}{}
\providecommand{\surnameend}{}
\providecommand{\urlprefix}{Available at }
\providecommand{\url}[1]{\texttt{#1}}
\providecommand{\href}[2]{\texttt{#2}}
\providecommand{\urlalt}[2]{\href{#1}{#2}}
\providecommand{\doi}[1]{doi:\urlalt{https://doi.org/#1}{#1}}
\providecommand{\eprint}[1]{arXiv:\urlalt{https://arxiv.org/abs/#1}{#1}}
\providecommand{\bibinfo}[2]{#2}

\bibitemdeclare{article}{aho1972minimum}
\bibitem{aho1972minimum}
\bibinfo{author}{Alfred \surnamestart Aho\surnameend} \&
  \bibinfo{author}{Thomas \surnamestart Peterson\surnameend}
  (\bibinfo{year}{1972}): \emph{\bibinfo{title}{{A minimum distance error
  correcting parser for context-free languages}}}.
\newblock {\slshape \bibinfo{journal}{SIAM Journal on Computing}}
  \bibinfo{volume}{1}(\bibinfo{number}{4}), pp. \bibinfo{pages}{305--312},
  \doi{10.1137/0201022}.

\bibitemdeclare{inproceedings}{bezakova2022effective}
\bibitem{bezakova2022effective}
\bibinfo{author}{Ivona \surnamestart Bez{\'a}kov{\'a}\surnameend},
  \bibinfo{author}{Kimberly \surnamestart Fluet\surnameend},
  \bibinfo{author}{Edith \surnamestart Hemaspaandra\surnameend},
  \bibinfo{author}{Hannah \surnamestart Miller\surnameend} \&
  \bibinfo{author}{David~E \surnamestart Narv{\'a}ez\surnameend}
  (\bibinfo{year}{2022}): \emph{\bibinfo{title}{{Effective succinct feedback
  for intro CS theory: A JFLAP extension}}}.
\newblock In: {\slshape \bibinfo{booktitle}{Proc.\ of the 53rd ACM Technical
  Symposium on Computer Science Education-Volume 1}}, \bibinfo{publisher}{ACM},
  pp. \bibinfo{pages}{976--982}, \doi{10.1145/3478431.3499416}.

\bibitemdeclare{inproceedings}{bruse2024computing}
\bibitem{bruse2024computing}
\bibinfo{author}{Florian \surnamestart Bruse\surnameend} \&
  \bibinfo{author}{Martin \surnamestart Lange\surnameend}
  (\bibinfo{year}{2024}): \emph{\bibinfo{title}{{Computing All Minimal Ways to
  Reach a Context-Free Language}}}.
\newblock In: {\slshape \bibinfo{booktitle}{Proc.\ 18th Int.\ Conf.\ on
  Reachability Problems, {RP'18}}}, {\slshape \bibinfo{series}{LNCS}}
  \bibinfo{volume}{15050}, \bibinfo{publisher}{Springer}, pp.
  \bibinfo{pages}{38--53}, \doi{10.1007/978-3-031-72621-7_4}.

\bibitemdeclare{book}{TATA}
\bibitem{TATA}
\bibinfo{author}{Hubert \surnamestart Comon\surnameend}, \bibinfo{author}{Max
  \surnamestart Dauchet\surnameend}, \bibinfo{author}{R{\'e}mi \surnamestart
  Gilleron\surnameend}, \bibinfo{author}{Florent \surnamestart
  Jacquemard\surnameend}, \bibinfo{author}{Denis \surnamestart
  Lugiez\surnameend}, \bibinfo{author}{Christof \surnamestart
  L{\"o}ding\surnameend}, \bibinfo{author}{Sophie \surnamestart
  Tison\surnameend} \& \bibinfo{author}{Marc \surnamestart Tommasi\surnameend}
  (\bibinfo{year}{2008}): \emph{\bibinfo{title}{{Tree Automata Techniques and
  Applications}}}.
\newblock \urlprefix\url{https://inria.hal.science/hal-03367725}.

\bibitemdeclare{inproceedings}{d2020automata}
\bibitem{d2020automata}
\bibinfo{author}{Loris \surnamestart D'Antoni\surnameend},
  \bibinfo{author}{Martin \surnamestart Helfrich\surnameend},
  \bibinfo{author}{Jan \surnamestart Kretinsky\surnameend},
  \bibinfo{author}{Emanuel \surnamestart Ramneantu\surnameend} \&
  \bibinfo{author}{Maximilian \surnamestart Weininger\surnameend}
  (\bibinfo{year}{2020}): \emph{\bibinfo{title}{{Automata tutor v3}}}.
\newblock In: {\slshape \bibinfo{booktitle}{Computer Aided Verification: 32nd
  International Conference, {CAV} 2020, Proc.\, Part {II}}},
  \bibinfo{publisher}{Springer}, pp. \bibinfo{pages}{3--14},
  \doi{10.1007/978-3-030-53291-8_1}.

\bibitemdeclare{article}{d2015can}
\bibitem{d2015can}
\bibinfo{author}{Loris \surnamestart D'Antoni\surnameend},
  \bibinfo{author}{Dileep \surnamestart Kini\surnameend},
  \bibinfo{author}{Rajeev \surnamestart Alur\surnameend},
  \bibinfo{author}{Sumit \surnamestart Gulwani\surnameend},
  \bibinfo{author}{Mahesh \surnamestart Viswanathan\surnameend} \&
  \bibinfo{author}{Bj{\"o}rn \surnamestart Hartmann\surnameend}
  (\bibinfo{year}{2015}): \emph{\bibinfo{title}{{How can automatic feedback
  help students construct automata?}}}
\newblock {\slshape \bibinfo{journal}{ACM Transactions on Computer-Human
  Interaction (TOCHI)}} \bibinfo{volume}{22}(\bibinfo{number}{2}), pp.
  \bibinfo{pages}{1--24}, \doi{10.1145/2723163}.

\bibitemdeclare{article}{d2015automata}
\bibitem{d2015automata}
\bibinfo{author}{Loris \surnamestart D'Antoni\surnameend},
  \bibinfo{author}{Matthew \surnamestart Weavery\surnameend},
  \bibinfo{author}{Alexander \surnamestart Weinert\surnameend} \&
  \bibinfo{author}{Rajeev \surnamestart Alur\surnameend}
  (\bibinfo{year}{2015}): \emph{\bibinfo{title}{{Automata Tutor and what we
  learned from building an online teaching tool}}}.
\newblock {\slshape \bibinfo{journal}{Bulletin of EATCS}}
  \bibinfo{volume}{3}(\bibinfo{number}{117}).

\bibitemdeclare{inproceedings}{DBLP:conf/stacs/DenisLT01}
\bibitem{DBLP:conf/stacs/DenisLT01}
\bibinfo{author}{Fran{\c{c}}ois \surnamestart Denis\surnameend},
  \bibinfo{author}{Aur{\'{e}}lien \surnamestart Lemay\surnameend} \&
  \bibinfo{author}{Alain \surnamestart Terlutte\surnameend}
  (\bibinfo{year}{2001}): \emph{\bibinfo{title}{{Residual Finite State
  Automata}}}.
\newblock In: {\slshape \bibinfo{booktitle}{{STACS} 2001, 18th Annual Symposium
  on Theoretical Aspects of Computer Science}}, \bibinfo{series}{Lecture Notes
  in Computer Science}, \bibinfo{publisher}{Springer}, pp.
  \bibinfo{pages}{144--157}, \doi{10.1007/3-540-44693-1\_13}.

\bibitemdeclare{inproceedings}{herwig2025earley}
\bibitem{herwig2025earley}
\bibinfo{author}{Maurice \surnamestart Herwig\surnameend},
  \bibinfo{author}{Norbert \surnamestart Hundeshagen\surnameend} \&
  \bibinfo{author}{Martin \surnamestart Lange\surnameend}
  (\bibinfo{year}{2025}): \emph{\bibinfo{title}{{An Earley-Based Universal
  Error-Correcting Parser}}}.
\newblock In: {\slshape \bibinfo{booktitle}{International Conference on
  Implementation and Application of Automata}}, \bibinfo{publisher}{Springer},
  pp. \bibinfo{pages}{208--222}, \doi{10.1007/978-3-032-02602-6_15}.

\bibitemdeclare{book}{hopcroft2001introduction}
\bibitem{hopcroft2001introduction}
\bibinfo{author}{John~E \surnamestart Hopcroft\surnameend},
  \bibinfo{author}{Rajeev \surnamestart Motwani\surnameend} \&
  \bibinfo{author}{Jeffrey~D \surnamestart Ullman\surnameend}
  (\bibinfo{year}{2001}): \emph{\bibinfo{title}{{Introduction to automata
  theory, languages, and computation}}}.
\newblock \bibinfo{publisher}{Addison-Wesley}.

\bibitemdeclare{inproceedings}{klint1994using}
\bibitem{klint1994using}
\bibinfo{author}{Paul \surnamestart Klint\surnameend} \& \bibinfo{author}{Eelco
  \surnamestart Visser\surnameend} (\bibinfo{year}{1994}):
  \emph{\bibinfo{title}{{Using filters for the disambiguation of context-free
  grammars}}}.
\newblock In: {\slshape \bibinfo{booktitle}{Proc. ASMICS Workshop on Parsing
  Theory}}, \bibinfo{address}{Milan, Italy}, pp. \bibinfo{pages}{1--20}.

\bibitemdeclare{inproceedings}{krogmeier2020decidable}
\bibitem{krogmeier2020decidable}
\bibinfo{author}{Paul \surnamestart Krogmeier\surnameend},
  \bibinfo{author}{Umang \surnamestart Mathur\surnameend},
  \bibinfo{author}{Adithya \surnamestart Murali\surnameend},
  \bibinfo{author}{P~\surnamestart Madhusudan\surnameend} \&
  \bibinfo{author}{Mahesh \surnamestart Viswanathan\surnameend}
  (\bibinfo{year}{2020}): \emph{\bibinfo{title}{{Decidable synthesis of
  programs with uninterpreted functions}}}.
\newblock In: {\slshape \bibinfo{booktitle}{International Conference on
  Computer Aided Verification}}, \bibinfo{publisher}{Springer}, pp.
  \bibinfo{pages}{634--657}, \doi{10.1007/978-3-030-53291-8_32}.

\bibitemdeclare{article}{Levenshtein_SPD66}
\bibitem{Levenshtein_SPD66}
\bibinfo{author}{V.~I. \surnamestart Levenshtein\surnameend}
  (\bibinfo{year}{1966}): \emph{\bibinfo{title}{{Binary codes capable of
  correcting deletions, insertions and reversals}}}.
\newblock {\slshape \bibinfo{journal}{Soviet Physics Doklady}}
  \bibinfo{volume}{10}(\bibinfo{number}{8}), pp. \bibinfo{pages}{707--710}.
\newblock \bibinfo{note}{Doklady Akademii Nauk SSSR, V163 No4 845-848 1965}.

\bibitemdeclare{incollection}{lombardy2008universal}
\bibitem{lombardy2008universal}
\bibinfo{author}{Sylvain \surnamestart Lombardy\surnameend} \&
  \bibinfo{author}{Jacques \surnamestart Sakarovitch\surnameend}
  (\bibinfo{year}{2008}): \emph{\bibinfo{title}{{The universal automaton}}}.
\newblock In: {\slshape \bibinfo{booktitle}{Logic and Automata}},
  \bibinfo{publisher}{Amsterdam University Press}, pp.
  \bibinfo{pages}{457--504}.

\bibitemdeclare{article}{niehren2013query}
\bibitem{niehren2013query}
\bibinfo{author}{Joachim \surnamestart Niehren\surnameend},
  \bibinfo{author}{J{\'e}r{\^o}me \surnamestart Champav{\`e}re\surnameend},
  \bibinfo{author}{Aur{\'e}lien \surnamestart Lemay\surnameend} \&
  \bibinfo{author}{R{\'e}mi \surnamestart Gilleron\surnameend}
  (\bibinfo{year}{2013}): \emph{\bibinfo{title}{{Query induction with
  schema-guided pruning strategies}}}.
\newblock {\slshape \bibinfo{journal}{The Journal of Machine Learning
  Research}} \bibinfo{volume}{14}(\bibinfo{number}{1}), pp.
  \bibinfo{pages}{927--964}, \doi{10.5555/2567709.2502611}.

\bibitemdeclare{article}{pillay2010learning}
\bibitem{pillay2010learning}
\bibinfo{author}{Nelishia \surnamestart Pillay\surnameend}
  (\bibinfo{year}{2010}): \emph{\bibinfo{title}{{Learning difficulties
  experienced by students in a course on formal languages and automata
  theory}}}.
\newblock {\slshape \bibinfo{journal}{ACM SIGCSE Bulletin}}
  \bibinfo{volume}{41}(\bibinfo{number}{4}), pp. \bibinfo{pages}{48--52},
  \doi{10.1145/1709424.1709444}.

\bibitemdeclare{inproceedings}{price2016generating}
\bibitem{price2016generating}
\bibinfo{author}{Thomas~W \surnamestart Price\surnameend},
  \bibinfo{author}{Yihuan \surnamestart Dong\surnameend} \&
  \bibinfo{author}{Tiffany \surnamestart Barnes\surnameend}
  (\bibinfo{year}{2016}): \emph{\bibinfo{title}{{Generating data-driven hints
  for open-ended programming}}}.
\newblock In: {\slshape \bibinfo{booktitle}{Proc.\ of the 9th International
  Conference on Educational Data Mining}}, \bibinfo{publisher}{International
  Educational Data Mining Society}, pp. \bibinfo{pages}{191--198}.

\bibitemdeclare{article}{rivers2017data}
\bibitem{rivers2017data}
\bibinfo{author}{Kelly \surnamestart Rivers\surnameend} \&
  \bibinfo{author}{Kenneth~R \surnamestart Koedinger\surnameend}
  (\bibinfo{year}{2017}): \emph{\bibinfo{title}{{Data-driven hint generation in
  vast solution spaces: a self-improving python programming tutor}}}.
\newblock {\slshape \bibinfo{journal}{International Journal of Artificial
  Intelligence in Education}} \bibinfo{volume}{27}, pp.
  \bibinfo{pages}{37--64}, \doi{10.1007/s40593-015-0070-z}.

\bibitemdeclare{inproceedings}{robson2024fsm}
\bibitem{robson2024fsm}
\bibinfo{author}{Eliot~Wong \surnamestart Robson\surnameend},
  \bibinfo{author}{Sam \surnamestart Ruggerio\surnameend} \&
  \bibinfo{author}{Jeff \surnamestart Erickson\surnameend}
  (\bibinfo{year}{2024}): \emph{\bibinfo{title}{{{FSM} Builder: {A} Tool for
  Writing Autograded Finite Automata Questions}}}.
\newblock In: {\slshape \bibinfo{booktitle}{Proc.\ of the 2024 on Innovation
  and Technology in Computer Science Education V. 1, {ITiCSE} 2024}},
  \bibinfo{publisher}{{ACM}}, pp. \bibinfo{pages}{269--275},
  \doi{10.1145/3649217.3653599}.

\bibitemdeclare{inproceedings}{shekhar2014enhancing}
\bibitem{shekhar2014enhancing}
\bibinfo{author}{Vinay~S \surnamestart Shekhar\surnameend},
  \bibinfo{author}{Anant \surnamestart Agarwalla\surnameend},
  \bibinfo{author}{Akshay \surnamestart Agarwal\surnameend},
  \bibinfo{author}{B~\surnamestart Nitish\surnameend} \& \bibinfo{author}{Viraj
  \surnamestart Kumar\surnameend} (\bibinfo{year}{2014}):
  \emph{\bibinfo{title}{{Enhancing JFLAP with automata construction problems
  and automated feedback}}}.
\newblock In: {\slshape \bibinfo{booktitle}{2014 seventh international
  conference on contemporary computing (ic3)}}, \bibinfo{publisher}{IEEE}, pp.
  \bibinfo{pages}{19--23}, \doi{10.1109/IC3.2014.6897141}.

\bibitemdeclare{inproceedings}{singh2013automated}
\bibitem{singh2013automated}
\bibinfo{author}{Rishabh \surnamestart Singh\surnameend},
  \bibinfo{author}{Sumit \surnamestart Gulwani\surnameend} \&
  \bibinfo{author}{Armando \surnamestart Solar-Lezama\surnameend}
  (\bibinfo{year}{2013}): \emph{\bibinfo{title}{{Automated feedback generation
  for introductory programming assignments}}}.
\newblock In: {\slshape \bibinfo{booktitle}{Proc.\ of the 34th ACM SIGPLAN
  conference on Programming language design and implementation}},
  \bibinfo{publisher}{ACM}, pp. \bibinfo{pages}{15--26},
  \doi{10.1145/2491956.2462195}.

\bibitemdeclare{article}{tai1979treetotree}
\bibitem{tai1979treetotree}
\bibinfo{author}{Kuo-Chung \surnamestart Tai\surnameend}
  (\bibinfo{year}{1979}): \emph{\bibinfo{title}{{The Tree-to-Tree Correction
  Problem}}}.
\newblock {\slshape \bibinfo{journal}{J. ACM}}
  \bibinfo{volume}{26}(\bibinfo{number}{3}), pp. \bibinfo{pages}{422--433},
  \doi{10.1145/322139.322143}.

\bibitemdeclare{article}{wang2017synthesis}
\bibitem{wang2017synthesis}
\bibinfo{author}{Xinyu \surnamestart Wang\surnameend}, \bibinfo{author}{Isil
  \surnamestart Dillig\surnameend} \& \bibinfo{author}{Rishabh \surnamestart
  Singh\surnameend} (\bibinfo{year}{2017}): \emph{\bibinfo{title}{{Synthesis of
  data completion scripts using finite tree automata}}}.
\newblock {\slshape \bibinfo{journal}{Proc.\ of the ACM on Programming
  Languages}} \bibinfo{volume}{1}(\bibinfo{number}{OOPSLA}), pp.
  \bibinfo{pages}{1--26}, \doi{10.1145/3133886}.

\end{thebibliography}

\end{document}